\newcommand{\bl}[1]{ {\color{black} #1} }
\newcommand{\red}[1]{ {\color{black} #1} }
\newcommand*{\rom}[1]{\expandafter\@slowromancap\romannumeral #1@}
\theoremstyle{plain}
\newtheorem{theorem}{Theorem}
\newtheorem{lemma}{Lemma}
\newtheorem{remark}{Remark}
\newtheorem{prop}{Proposition}
\newtheorem{definition}{Definition}
\begin{document}

\title{Simulation-assisted Learning of Open Quantum Systems}

\author{Ke Wang}
\email{wangke.math@psu.edu}
\affiliation{Department of Mathematics, Pennsylvania State University}

\author{Xiantao Li}
\email{Xiantao.Li@psu.edu}
\affiliation{Department of Mathematics, Pennsylvania State University}

\maketitle

\begin{abstract}
  Models for open quantum systems, which play important roles in  electron transport problems and quantum computing, must take into account the interaction of the quantum system with the surrounding environment. Although such models can be derived in some special cases, in most practical situations, the exact models are unknown and have to be calibrated. This paper presents a learning method to infer parameters in Markovian open quantum systems from measurement data. One important ingredient in the method is a direct simulation technique of the quantum master equation, which is designed to preserve the completely-positive property with guaranteed accuracy. The method is particularly  helpful in the situation where the time intervals between measurements are large. The approach is validated with error estimates and numerical experiments.  
\end{abstract}

\section{Introduction}
Quantum learning has recently emerged as a field at the intersection of quantum physics and computer science \cite{da2011practical,wang2017experimental,huang2022foundations}.  Its primary focus is on the estimation and characterization of energy levels and interaction coefficients within a quantum system. 
Such effort has become increasingly crucial for characterizing, optimizing, and controlling quantum systems. One particularly explored area is Hamiltonian learning, where the emphasis is to infer the Hamiltonians \red{\cite{granade2012robust,rudinger2015compressed,wiebe2015quantum,burgarth2017evolution,sone2017hamiltonian,wang2017quantum,zubida2022optimal,huang2023learning,haah2022optimal,haah2016sample,bakshi2023learning, ni2024quantum, li2023heisenberg, holzapfel2015scalable}},  which ultimately determine the system's energy spectrum and dynamic behavior. Compared to direct simulation approaches, the learning process attempts to map observation data back to the model parameters. 
An important extension of Hamiltonian learning is the parameter identification of open quantum systems \red{\cite{bairey2020learning,franca2024efficient,samach2022lindblad}}, ones that continuously interact with their environment \cite{breuer2002theory}. The ultimate goal is to reconstruct the interactions in the Lindblad-Gorini-Kossakowski-Sudarshan quantum master equation (QME) \cite{lindblad1976generators,gorini1976completely}:
\begin{equation}\label{eq: lindblad}
    \frac{d}{dt} \rho 
    = -i[H,\rho]+\sum_{j=1}^{N_V}(V_j\rho V^{\dag}_j-\frac{1}{2}V_j^{\dag}V_j\rho-\frac{1}{2}\rho V_j^{\dag}V_j),\\
\end{equation}
where $H$ is the system Hamiltonian and $V_j$'s are jump operators that arise from the interactions with the environment 
\cite{breuer2002theory}. 

In practice, the learning tasks are usually formulated as an optimization problem, which is then solved by an iterative method, and as such, one needs to frequently access the objective function, e.g., through the expectations of some observables.  The ability to acquire such quantities might be limited by the measurement capabilities.   
Furthermore, the availability of the gradient of the objective function, a necessary ingredient for a fast optimization algorithm, might not be attainable through experiments either.
In this paper, we propose a simulation-assisted method to address these issues. In contrast to the equation-based methods \cite{bairey2020learning,franca2024efficient} to identify Lindbladians in \eqref{eq: lindblad}, we formulate a \emph{trajectory-based} framework, where the observations $A(t)=\tr(A e^{t\mathcal{L}} \rho(0) )$ are fit by simulating the QME \eqref{eq: lindblad}. {\color{black} The main departure from equation-based methods \cite{franca2024efficient,rouze2021learning,samach2022lindblad} is that we do not assume access to the quantum system.  We designed a classical learning algorithm with a given time series measurements as input. Namely, we do not assume the flexibility of choosing the time steps or the utilization of classical shadow tomography to make further measurements. For example, the efficiency in the sample complexity  in the approach \cite{franca2024efficient} requires choosing the initial and end time carefully. Part of the challenges comes from the presence of noise in the data, which complicates the numerical differentiation procedure. In our approach, the accuracy is controlled by chopping measurement intervals $\Delta t$ into smaller steps $\delta t$, and in between we simulate the Lindblad dynamics with the proposed parameters. 

 Another notable difference from the equation-based approaches \cite{bairey2020learning,franca2024efficient} is that  our optimization problem does not require the expectations associated with the right-hand side of \cref{eq: lindblad}, i.e., \( \tr(OV_j\rho V_j^\dag), \tr(OV_j^\dag  V_j\rho),  \tr(O\rho V_j^\dag  V_j), \) and thus fewer features are needed from the data set. }

To enable such a simulation-assisted approach, we propose a simulation algorithm, denoted here by $\mathcal{M}_{\delta t}(t)$, that has global error $\delta t^2$, in that $e^{\mathcal{L}  t} - \mathcal{M}_{\delta t}(t) = \mathcal{O}\left(\delta t^2\right)$, we choose a semi-implicit algorithm, which is  stable even when the coherent term in the QME \eqref{eq: lindblad} has large coefficients. This makes it more robust in practice than the second-order method in Breuer and Petruccione \cite{breuer2002theory}. Furthermore, we also show that the method has a completely positive property and it can be easily written in a Kraus form. This makes it possible to implement such a simulation method on a quantum computer as well. This is done by unraveling the Lindblad dynamics to a stochastic Schr\"odinger equation \cite{breuer2002theory}, followed by a stochastic expansion \cite{kloeden2011numerical}. 

Another practical aspect of our approach is efficient optimization. With the Kraus representation of our numerical approximation, the calculation of the gradient is streamlined. This allows us to apply the Levenberg-
Marquardt algorithm, which has very rapid convergence \cite{yamashita2001rate,fan2005quadratic}. With mild assumptions on the fitting error, we will show how the approximation error from the numerical simulation and the statistical error from the measurements affect the performance of the parameter identification. 

The rest of the paper is organized as follows: We present a general learning framework 
in the form of a nonlinear least squares  problem in \cref{sec:nlsq}, and explain the difference and connections to existing works in \cref{sec: related}. To emphasize the algorithmic details, we present the methods without reference to specific open quantum systems, also with the hope that the methods can be applied to a broader class of problems. Then in \cref{sec: simulation} and \cref{sec: optm}, we present the specific simulation method and how it is integrated with the optimization procedure. In \cref{sec: error} and \cref{sec: num}, we present some error analysis and results from some numerical experiments, \red{where we detail the specific applications of open quantum systems.}

{\color{black}
\section{Notations and Terminologies}
Throughout this paper, the Euclidean norm is denoted by $\|\cdot \|$, i.e. for $\bm v \in \mathbb{C}^d,$ $\|\bm v\| = (\sum_{i=1}^d v_i^2)^{1/2}$. 
The density operator $\rho \in \mathbb{C}^{d\times d}$  is represented as a semi-positive definite matrix   with trace $1$, properties that will be expressed as $\rho \succeq 0$ and $\tr(\rho)=1.$  An emphasis will be placed on quantum evolutions that are trace-preserving and completely positive. These properties are formalized in terms of dynamic maps $\mathbb{C}^{d\times d} \to \mathbb{C}^{d\times d}$, e.g. see \cite{watrous2018theory}. In particular, if  $\tr(\mathcal{A}(\rho)) = \tr(\rho)$ for all $\rho$, then  $\mathcal{A}$ is said to be trace-preserving. Similarly, $\mathcal{A}$ is said to be positive if $\rho \succeq 0 \Rightarrow \mathcal{A} \rho \succeq 0$. Further, $\mathcal{A}$ is said to be completely positive if $\mathcal{A}\otimes I_{m\times m} $ is a positive map for every $m \ge 1$.
A useful representation of trace preserving and completely positive  (CPTP) maps is the Kraus form, which expresses $\mathcal{A}$  as follows, 
\begin{equation}
    \mathcal{A}\rho = \sum_j V_j\rho V_j^{\dag}, \quad \mathrm{with}\;  \sum_j V_j^{\dag} V_j = I
\end{equation}

In this paper, $\tau_m$ and $t_n$ respectively denote the simulation and measurement times. The corresponding time interval is represented by $\delta t$ and $\Delta t$, respectively, with the ratio denoted by $L = \frac{\Delta t}{\delta t}$, see \cref{fig:NotationsForTime}. In particular, the simulation times are designed to be shorter or equal to the measurement times. The ratio $L$ can be controlled to obtain the desired accuracy.}
\begin{figure}[h]
    \centering
    \includegraphics[scale = 0.4]{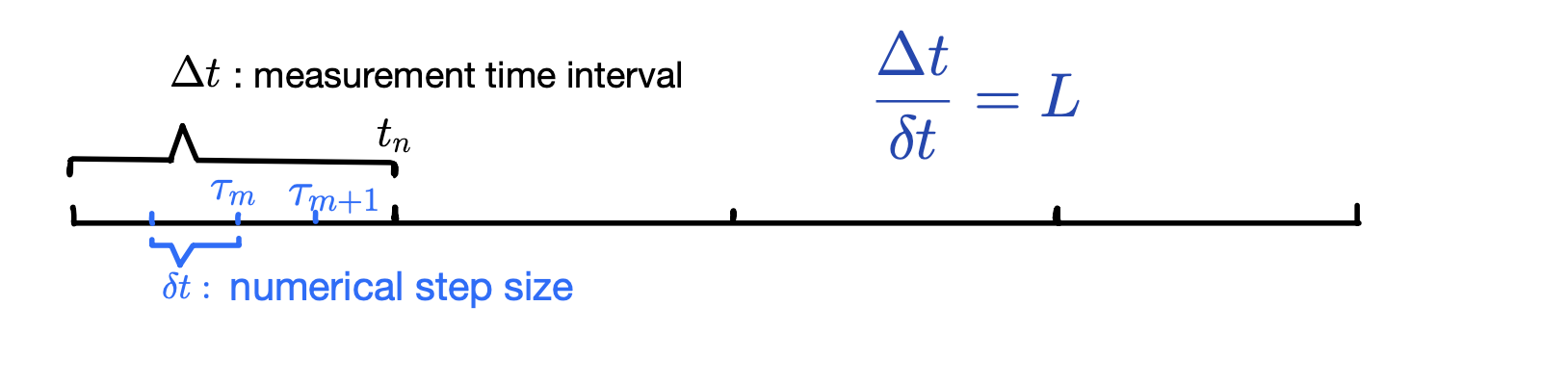}
    \caption{Time steps associated with the simulations and measurements. The measurement time interval $\Delta t$ is chopped into smaller intervals with  $\delta t$ representing numerical step size. }
    \label{fig:NotationsForTime}
\end{figure}

\section{The Learning Framework for Parameter Identification of Lindbladians}

\subsection{A Least Squares Formulation of the Learning Problem}\label{sec:nlsq}
In practice, the unknown parameters in an open quantum system can appear in both the Hamiltonian and the dissipative terms in  the QME \eqref{eq: lindblad}. To indicate the role of the parameters, we first rewrite \cref{eq: lindblad} as follows,
\begin{equation}\label{eq: Lindblad1}
\begin{aligned}
    \frac{d}{dt} \rho  &= \mathcal{L}_H(\bm \theta_H) \rho + \mathcal{L}_D(\bm \theta_D) \rho, \\
    \mathcal{L}_H(\bm \theta_H)\rho = -i[H,\rho], &
   \quad \mathcal{L}_D(\bm \theta_D)\rho = \sum_{j=1}^{N_V}\left(V_j\rho V_j^{\dag}-\frac{1}{2}V_j^{\dag}V_j\rho-\frac{1}{2}\rho V_j^{\dag}V_j\right).
    \\
\end{aligned} 
\end{equation}
We can combine the parameters by writing $\bm \theta:= (\bm \theta_H, \bm \theta_D)$; $\mathcal{L}(\bm \theta):= \mathcal{L}_H(\bm \theta_H) + \mathcal{L}_D(\bm \theta_D)$ will be called the  \emph{Lindbladian}.  We refer to the two sets of parameters as Hamiltonian and dissipative parameters, respectively. When $\mathcal{L}_D=0$, the problem is reduced to a Hamiltonian learning problem.  

We assume that the measurement data is a time series $y_{k,n}, n= 1,2,\cdots, N_T,$ which correspond to observables $A^{(k)}, k =1,\cdots,N_O$, at time instances $t_n=n\Delta t, \quad 0\leq n \leq N_T,$
\begin{equation}\label{ykn}
   y_{k,n} := \tr \big( A^{(k)}  \rho(t_n ;\bm \theta) \big).
\end{equation}
We have set $t_0=0,$ and for simplicity we assume a uniform time-lapse $\Delta t$ between experiments, although the extension to general distributions of measurement times is straightforward. Also indicated in \cref{ykn} is the dependence of the solution of \eqref{eq: Lindblad1} on the parameters,  i.e., $\rho(t_n ;\bm \theta) = e^{t_n \mathcal{L}(\bm \theta) } \rho(0).$

We now formulate the learning problem based on the consistency in \cref{ykn}, as a least squares problem,
\begin{equation}\label{eq: minphi}
    \min_{\bm \theta} \phi(\bm \theta),
\end{equation}
where the objective function is given by,
\begin{equation}\label{eq: phi}
    \phi(\bm \theta) = \frac1{2N_O N_T} \sum_{\overset{k=1,\cdots,N_O}{n = 1,\cdots,N_T}} \abs{  y_{k,n}-y_{k,n}^* }^2, \quad y_{k,n}^* := \tr \big( A^{(k)}  \rho(t_n ;\bm \theta^*) \big),
\end{equation}
Here, $y_{k,n}^* $ is interpreted as the measurement data with respect to the true, but unknown parameter $\bm \theta^*$; $\bm\theta^*$ is the solution of the least squares problem, i.e., $\bm \theta^* = \mathrm{argmin} \phi(\bm\theta)$.
Due to the typical nonlinear dependence of $ \rho(t_n ;\bm \theta)$ on $\bm \theta$, the optimization problem is a nonlinear least squares problem \cite{kelley1999iterative}.

An alternative viewpoint is a maximum likelihood estimate (MLE)  from parameter estimation methods for dynamical systems \cite{brunel2008parameter}. Namely, the measurement outcome $y_{k,n}$ comes  with a Gaussian noise $\epsilon_{k,n}$,
\[ 
 y_{k,n} = \text{tr} \big( A^{(k)}  \rho(t_n ;\bm \theta) \big) + \epsilon_{k,n}.
\]
Then \cref{eq: phi} is the log-likelihood function.

\subsection{Related Works}\label{sec: related}
The QME \eqref{eq: lindblad} is an important alternative to study electron transport properties, which are important in material science \cite{biele2012stochastic,di2007stochastic}. \red{Attempts have been made to integrate the model with existing parallel code \cite{appel2009stochastic} to simulate systems with many electrons. These simulation algorithms can be regarded as a bottom-up approach, where the bath operators are assumed in advance. }
 Another increasingly important application is quantum computing, where the quantum circuits are subject to environmental noise, and many quantum error mitigation schemes require the knowledge of an error model, i.e., the channel  interactions and strength \cite{temme2017error,endo2018practical,cai2023quantum}.

  Bairey et al.  \cite{bairey2020learning} proposed to learn the coefficients in the QME \eqref{eq: lindblad} by acting the steady-state equation on a collection of observables. These equations can thus be rewritten in terms of the expectations by applying the adjoint of the Lindbladians to the observables, leading to equations of Ehrenfest form.  For example, by letting $\langle A\rangle(t) := \text{tr}(A\rho(t))$, one can derive from \cref{eq: lindblad}, 
\begin{equation}
\label{eq:measure dynamics}
    \frac{d}{dt}\langle A\rangle (t)= -i\langle[A,H]\rangle(t)+\sum_{j=1}^{N_V}\left(\langle V_j^{\dag}AV_j\rangle(t)-\frac{1}{2}\langle AV_j^{\dag}V_j\rangle(t)-\frac{1}{2}\langle V_j^{\dag}V_jA\rangle(t) \right).
\end{equation}
The operators on the right-hand side can be expressed on a known basis with unknown parameters, which reformulate the equation in terms of measurement data and parameter values. 
At a steady state, the time derivative drops out, and the approach by Bairey et al.  \cite{bairey2020learning} yields a system of equations, usually over-determined, for the unknown parameters.  
Franca et al.  \cite{franca2024efficient} extended this framework to dynamics problems, with the time derivatives estimated from polynomial approximations. As alluded to in the previous section, these two approaches can be summarized as \emph{equation-based:} in that the objective function is set up so that the equation  \eqref{eq:measure dynamics} holds the best it can.  An important practical issue is that the time interval for the measurements, denoted in this article by $\Delta t$,  might be large, in which case,  the accuracy of the inference is limited by the approximation of the time derivatives.  \red{  If there is no means to decrease the sample time interval, we face inherent inaccuracies in the learning method.  More importantly, the measurement data in practical applications always contain noise, leading to a subtle challenge: For the numerical differentiation procedure to be accurate, $\Delta t$ must be sufficiently small. However, a small $\Delta t$ can amplify the effect of measurement noise. This difficulty has been identified and analyzed in other applications \cite{jauberteau2009numerical, ahnert2007numerical}.  The approach in \cite{franca2024efficient} mitigated this issue by choosing the measurement times based on the Chebyshev measure, which on the other hand, requires an upper bound on the time interval $[0, t_{\max}].$ Another challenge is that the right-hand side of Eq. \eqref{eq:measure dynamics} may involve many expectations that have to be obtained from measurements.} It is also worthwhile to mention that 
 Boulant et al. \cite{boulant2003robust} also proposed numerical algorithms for reconstructing Lindblad operators and highlighted the importance of the CP property. But their method to ensure this property is quite involved.

The more recent work \cite{av2020direct} aims at reconstructing Lindbladians based on measurements at multiple times with an optimization problem.   Their method is based on experimental measurements and the least squares problem is formulated in terms of the discrete probability induced by the observables. Numerical simulations are only used to determine a stopping criterion. Compared to this experiment-based approach, the simulation-assisted approach only works with the original set of measurement data. More importantly, the numerical simulation provides the gradient of the objective function, which can substantially speed up the optimization process. 
\red{Another related approach to Lindbladian learning is the learning algorithms from Gibbs states~\cite{haah2022optimal,bakshi2023learning}, since the Gibbs state could be reached by certain Lindblad dynamics.
 }

While the equation-based methods \cite{bairey2020learning,franca2024efficient}  fundamentally differ from the method presented in this paper, the two approaches can be combined to accelerate the parameter estimation procedure. Indeed, the current problem can be viewed, in the broader context, as parameter estimation of  ODEs \cite{brunel2008parameter}, where the polynomial approximation of derivatives is known as polynomial regression. It has been used as a preparation of a two-step estimation procedure \cite{strebel2013preprocessing,bhaumik2015bayesian}.  The parameters obtained from the linear regression, e.g., in the approach of Franca et al. \cite{franca2024efficient} can be used as an initial guess for a trajectory-based approach.

\subsection{Computing the Objective Function using Direct Simulations of the QME}\label{sec: simulation}

Solving the optimization problem in \eqref{eq: minphi} requires access to the expectation of $A^{(k)}$ for an approximate parameter $\bm \theta$, and such data are unavailable from experiments.  This is treated by an efficient numerical algorithm for solving the QME  \eqref{eq: Lindblad1}. Here we outline a simple derivation of numerical methods, which can be implemented on either classical or quantum computers.

In a nutshell, to approximate the solution operator $e^{\mathcal{L}t}$ of the QME \eqref{eq: lindblad},  a simulation-assisted algorithm uses an approximation  $\mathcal{M}_{\delta t}(t)$  and minimizes the difference between $e^{\mathcal{L}^\dagger (t_n)} A$ and $\mathcal{M}_{\delta t}^\dagger(t_n) A$, by taking the trace with $\rho(0).$ Another advantage of this approach is that we no longer have to measure the terms on the right-hand side, as was done in \cite{bairey2020learning,franca2024efficient}. Thus the number of measurements can be significantly reduced. 

The QME encodes a trace-preserving and completely positive (TPCP) map \cite{lindblad1976generators,gorini1976completely}, which in the Kraus form, can be generally written as \cite{watrous2018theory} (Choi-Kraus’theorem),
\begin{equation}\label{tpcp}
    \rho_{n+1} = \mathcal{K}[\rho_n]:= \sum_{j} F_j \rho_n F_j^{\dagger},
\end{equation}
where $\sum_j F_j^{\dag}F_j = I$\cite{breuer2002theory}. Therefore, if such a Kraus form associated with the solution $e^{t_n \mathcal{L}(\bm \theta) }$ of the QME \eqref{eq: Lindblad1} can be found, then the objective function in \cref{eq: phi} and its gradient can be directly evaluated to carry out the optimization task. Another interesting aspect is that this procedure also places the problem in the general framework of learning a quantum system \cite{huang2022foundations}. Boulant et al. \cite{boulant2003robust} have demonstrated that the CP property increases the overall robustness of the parameter estimation procedure.

Our approach starts by first unraveling the Lindblad equation \eqref{eq: Lindblad1} into the stochastic Schr\"odinger equation \cite{breuer2002theory},
\begin{equation}
\label{eq:SSE}
    \mathrm{d}\ket{\psi(t)} = \left(-iH \ket{\psi(t)}-\frac{1}{2}\sum_{j=1}^{N_V} V_j^{\dag}V_j\ket{\psi(t)} \right)\mathrm{d}t +\sum_{j=1}^{N_V} V_j\ket{\psi(t)} \mathrm{d}W_{j;t}
\end{equation}
Here $\ket{\psi(t)}$ represents a stochastic realization of a quantum state $\rho(t)$, in the sense that,
\begin{equation}\label{eq: psi2rho}
    \rho(t) = \mathbb{E}\big[\ketbra{\psi(t)}\big].
\end{equation}
In the stochastic equation, $W_{j;t}$'s are independent Brownian motions that incorporate the noise from the bath.

Although in practice the trace-preserving property can be ensured by simple scaling, the completely positive property is often destroyed by classical ODE methods \cite{cao2021structure}. \bl{A simple example is the Euler's method, e.g., applied to a simple Lindblad equation: $\frac{d}{dt}\rho = - \frac{1}{2} \{V^\dag V, \rho\} + V \rho V^\dag, $
\[
\rho_{n+1} = \rho_n - \frac12 \delta t  V V^\dag \rho_n - \frac12  \delta t\rho_n V^\dag V +  \delta t V \rho_n V^\dag. 
\]
The right hand side is a Kraus form, but not in a diagonal form: It can be 
written as, $\rho_{n+1} = \sum_{i,j=1}^3 a_{i,j} F_i \rho_n F_j^\dag$, with
$F_1=I, F_2= V, F_3=V V^\dag $. But the corresponding matrix $(a_{i,j})$ is clearly  not positive definite. Consequently, standard ODE methods may not induce a CP map. }

The equivalence between the QME \eqref{eq: lindblad} and \cref{eq: psi2rho} is the key ingredient for constructing an approximation of the QME that preserves the CP property.   As a quick demonstration, we first consider a time discretization of the stochastic Schr\"odinger equation \eqref{eq:SSE}  by the semi-implicit Euler method \cite{kloeden2011numerical}. Toward this end, we define numerical time steps, $\tau_m=m\delta t, m= 0, 1, \cdots, M$, and an approximate wave function $\ket{\psi_m}$,
\begin{equation}
    \ket{\psi(\tau_m)} \approx \ket{\psi_m}, \; m\geq 0. 
\end{equation}
It is important to point out that the step size $\delta t$ is a numerical parameter, and it can be much smaller than the measurement time $\Delta t.$ In order for the simulation to produce expectations at the same time steps as the experiments, choose $\delta t$ such that,
\begin{equation}\label{lmn}
    L= \frac{\Delta t}{ \delta t} \in \mathbb{N}, \quad M= L N_T.
\end{equation}

The semi-implicit Euler method follows a time-marching step and implements an iteration formula for $\ket{\psi_m}$ as follows,
\begin{equation}\label{eq: semi-euler}
\ket{\psi_{m+1}} = \ket{\psi_m}+G\ket{\psi_{m+1}}\delta t+\sum_{j=1}^{N_V} V_j\ket{\psi_m}\delta W_{j,m}. 
\end{equation}
The terminology "semi-implicit" comes from the treatment of the noise: It is only sampled from the current time step and if $V_j=0$, this method is the standard implicit Euler method for an ordinary differential equation.
In \cref{eq: semi-euler} $\delta W_{j,m}$'s are independent Gaussian random variables with mean zero and variance $\delta t.$ In addition, 
 the matrix $G$ is given by,
\begin{equation}\label{matG}
    G= -iH-\frac{1}{2}\sum_{j=1}^{N_V} V_j^{\dag}V_j.
\end{equation} 
It also appears in the structure-preserving scheme \cite{cao2021structure}.

We can express the method in a compact form
\begin{equation}\label{eq: semi-euler1}
    \ket{\psi_{m+1}} = (I-G\delta t)^{-1}\left[ I +\sum_{j=1}^{N_V} V_j \delta W_{j,m} \right] \ket{\psi_m}.
\end{equation}
At this point, an approximation method for the density operator $\rho$ can easily be obtained.  Specifically, let $\rho_m = \mathbb{E}\big[\ket{\psi_{m}}\bra{\psi_{m}}\big], m\geq 0$; $\rho(\tau_m) \approx \rho_m.$  By taking expectations of \eqref{eq: semi-euler1}, we arrive at, 
\begin{equation}\label{semi-Euler}
\rho_{m+1} = (I-G\delta t)^{-1}\rho_m(I-G\delta t)^{-\dag}+(I-G\delta t)^{-1}\sum_{j=1}^{N_V} V_j\rho_m V_j^{\dag}(I-G\delta t)^{-\dag}\delta t.
\end{equation}
Here we have used $\mathbb{E}\big[ \delta W_{j,m}]=0$ and $\mathbb{E}\big[ \delta W_{j,m} \delta W_{k,m} \big]= \delta t \delta_{j,k}.$

Clearly, this can be written in the Kraus form  \eqref{tpcp}. The corresponding Kraus operators are given by,
\begin{align}\label{F0Fj}
 F_0 =& (I-G\delta t)^{-1},  \\ F_j =& (I-G\delta t)^{-1} V_j\sqrt{\delta t}, j= 1,\cdots,N_V.
\end{align}

 This Kraus form from the semi-implicit Euler method is summarized in \cref{lemma:1.0}.
\begin{lemma}
\label{lemma:1.0}
    The  semi-implicit Euler method induces a Kraus form, i.e.,
    $\rho_{m+1} = \mathcal{K}[\rho_m] = \sum_j F_j\rho_m F_j^{\dag}$ where $\sum_j F_j^{\dag}F_j = I+\mathcal{O}(\delta t^2).$ In addition,
    \begin{equation}
        e^{\delta t \mathcal{L}(\bm \theta) }\rho  - \mathcal{K}\rho = \mathcal{O}(\delta t^2). 
    \end{equation}
\end{lemma}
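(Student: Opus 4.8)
The plan is to prove both claims by expanding the operators in \cref{F0Fj} to first order in $\delta t$ through the Neumann series $(I-G\delta t)^{-1} = I + G\delta t + G^2\delta t^2 + \mathcal{O}(\delta t^3)$, and then to verify that the specific form of $G$ in \cref{matG} forces the spurious first-order terms to cancel. The single algebraic fact that drives both parts is the identity $G + G^\dagger = -\sum_{j=1}^{N_V} V_j^\dagger V_j$, which is immediate from \cref{matG} since the Hermitian Hamiltonian contributes $-iH + iH = 0$ and only the damping term survives.

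For the trace condition I would compute $\sum_{j=0}^{N_V} F_j^\dagger F_j$. The leading term $F_0^\dagger F_0 = (I - G^\dagger\delta t)^{-1}(I - G\delta t)^{-1}$ expands to $I + (G+G^\dagger)\delta t + \mathcal{O}(\delta t^2)$. Each remaining term $F_j^\dagger F_j = \delta t\, V_j^\dagger (I-G^\dagger\delta t)^{-1}(I-G\delta t)^{-1} V_j$ carries an explicit factor $\delta t$, so the resolvent corrections are pushed to higher order and $\sum_{j\ge 1} F_j^\dagger F_j = \delta t\sum_j V_j^\dagger V_j + \mathcal{O}(\delta t^2)$. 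Adding the two contributions and substituting the identity above, the $\mathcal{O}(\delta t)$ pieces $(G+G^\dagger)\delta t$ and $\delta t\sum_j V_j^\dagger V_j$ are exact negatives, leaving $\sum_j F_j^\dagger F_j = I + \mathcal{O}(\delta t^2)$.

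For the consistency estimate I would expand $\mathcal{K}\rho = F_0\rho F_0^\dagger + \sum_{j\ge 1} F_j\rho F_j^\dagger$ in the same way. The diagonal term gives $\rho + \delta t(G\rho + \rho G^\dagger) + \mathcal{O}(\delta t^2)$, while the jump sum gives $\delta t\sum_j V_j\rho V_j^\dagger + \mathcal{O}(\delta t^2)$. Writing out $G\rho + \rho G^\dagger = -i[H,\rho] - \tfrac12\sum_j(V_j^\dagger V_j\rho + \rho V_j^\dagger V_j)$ and adding the jump term reproduces exactly the generator $\mathcal{L}(\bm\theta)$ of \cref{eq: Lindblad1}. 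Comparing with $e^{\delta t\mathcal{L}}\rho = \rho + \delta t\,\mathcal{L}(\bm\theta)\rho + \mathcal{O}(\delta t^2)$ then yields $e^{\delta t\mathcal{L}}\rho - \mathcal{K}\rho = \mathcal{O}(\delta t^2)$.

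The computations are routine; the one point needing care is the bookkeeping of operator orderings, since $H$, the $V_j$, and the resolvent $(I-G\delta t)^{-1}$ do not commute, so each Neumann factor must be kept on its correct side throughout. The genuinely structural step — the place where a generic linear update would fail — is the cancellation of the first-order terms, which hinges entirely on $G + G^\dagger = -\sum_j V_j^\dagger V_j$. This is precisely why $G$ is defined in \cref{matG} with the $-\tfrac12\sum_j V_j^\dagger V_j$ damping term rather than from the Hamiltonian alone, and it is what simultaneously secures both the near-trace-preservation and the first-order consistency of $\mathcal{K}$.
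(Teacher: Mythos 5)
Your proposal is correct and follows essentially the same route as the paper, which simply asserts that the lemma ``can be verified by direct expansions of the Kraus operators in \cref{F0Fj}''; you have supplied exactly those expansions, correctly identifying the key cancellation $G+G^{\dagger}=-\sum_{j}V_j^{\dagger}V_j$ that makes both the near-trace-preservation and the first-order consistency work. No gaps.
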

The approximation property can be verified by direct expansions of the Kraus operators in \cref{F0Fj}. 

In practice, such first-order methods have limited accuracy. To ensure better performance, we consider the second-order implicit approximation method for stochastic differential equations \cite[Chapter 15]{kloeden2011numerical}. When applied to \cref{eq:SSE}, the method can be written as,
\begin{equation}\label{eq: 2nd-order}
\begin{aligned}
    \ket{\psi_{m+1}} &= (I-\frac{1}{2}G\delta t)^{-1}(I+\frac{1}{2}G\delta t)\ket{\psi_m}+\sum_{j=1}^{N_V}(I-\frac{1}{2}G\delta t)^{-1}(V_j+\frac{1}{2}V_jG\delta t)\ket{\psi_m}\delta\hat{W}_j\\
    &+\frac{1}{2}(I-\frac{1}{2}G\delta t)^{-1}\sum_{j_1,j_2 = 1}^{N_V} V_{j_2}V_{j_1}\ket{\psi_m}\left(\delta \hat{W}_{j_1,m}\delta\hat{W}_{j_2,m}+U_{j_1,j_2,m}\right).
\end{aligned}
\end{equation}

In this expression,  $\delta\hat{W}$ is an approximation of an increment of the Brownian motion. 
In addition, $U_{j_1,j_2}$ are independent two-point distribution, defined as,
\begin{equation}
\begin{aligned}
    &P(U_{j_1,j_2} = \pm\delta t) = \frac{1}{2},\forall j_2=1,\cdots,j_1-1,\\
&U_{j_1,j_1} = -\delta t,\\
&U_{j_1,j_2} = -U_{j_2,j_1}, \text{ for } j_2 = j_1+1,\cdots,N_V. j_1=1,\cdots,N_V,\\
\end{aligned}
\end{equation}

We now state the approximation property of this method. 

\begin{lemma}
\label{lemma: 2.0}
    The implicit second-order approximation induces a Kraus form, i.e. $\rho_{m+1} = \sum_j F_j\rho_m F_j^{\dag}$, where $\sum_j F_jF_j^{\dag} = I+\mathcal{O}(\delta t^3)$.  In addition, the one-step error is given by, 
    \begin{equation}
        e^{\delta t \mathcal{L}} \rho = \sum_j F_j\rho F_j^{\dag} + \mathcal{O}(\delta t^3).
    \end{equation}

\end{lemma}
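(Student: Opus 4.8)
\emph{The plan.} The idea is to read the Kraus operators directly off the random one-step propagator and then establish both claims by expanding moments in $\delta t$. First I would set $P := (I-\tfrac12 G\delta t)^{-1}$ and rewrite the update \eqref{eq: 2nd-order} as $\ket{\psi_{m+1}} = \mathcal{A}\ket{\psi_m}$ with the random linear operator
\begin{equation*}
\mathcal{A} = A_0 + \sum_{j=1}^{N_V} A_j\,\delta\hat{W}_{j,m} + \sum_{j_1,j_2=1}^{N_V} B_{j_1,j_2}\left(\delta\hat{W}_{j_1,m}\delta\hat{W}_{j_2,m}+U_{j_1,j_2,m}\right),
\end{equation*}
where $A_0 := P(I+\tfrac12 G\delta t)$, $A_j := P(V_j+\tfrac12 V_j G\delta t)$, and $B_{j_1,j_2} := \tfrac12 P V_{j_2}V_{j_1}$. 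Because the step-$m$ increments are fresh and independent of $\ket{\psi_m}$, taking $\rho_m = \mathbb{E}[\ketbra{\psi_m}]$ gives $\rho_{m+1} = \mathbb{E}[\mathcal{A}\rho_m\mathcal{A}^{\dagger}]$. The crucial structural observation is that $\mathcal{A}$ depends only on the finitely many discrete variables $\{\delta\hat{W}_{j,m}\}$ and $\{U_{j_1,j_2,m}\}$, so it takes finitely many values $\mathcal{A}^{(s)}$, each with probability $p_s$.

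\emph{Kraus form and normalization.} From the finiteness above,
\begin{equation*}
\rho_{m+1} = \sum_s p_s\,\mathcal{A}^{(s)}\rho_m\big(\mathcal{A}^{(s)}\big)^{\dagger} = \sum_s F_s\rho_m F_s^{\dagger}, \qquad F_s := \sqrt{p_s}\,\mathcal{A}^{(s)}.
\end{equation*}
This is manifestly a Kraus form, and complete positivity is automatic: each $\rho\mapsto \mathcal{A}^{(s)}\rho(\mathcal{A}^{(s)})^{\dagger}$ is CP and a convex combination of CP maps is CP, so no positivity of a Gram matrix needs to be checked by hand. The identity then reduces the normalization claim to a moment computation; writing $\sum_s F_s^{\dagger}F_s = \mathbb{E}[\mathcal{A}^{\dagger}\mathcal{A}]$ (the combination that controls trace preservation), I would expand $A_0,A_j,B_{j_1,j_2}$ using $P = I+\tfrac12 G\delta t+\mathcal{O}(\delta t^2)$ and take expectations with $\mathbb{E}[\delta\hat{W}_j]=0$, $\mathbb{E}[\delta\hat{W}_{j}\delta\hat{W}_{k}]=\delta_{jk}\delta t$, $U_{jj}=-\delta t$, and $\mathbb{E}[U_{j_1,j_2}]=0$ for $j_1\neq j_2$. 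The $\mathcal{O}(\delta t)$ contributions cancel through the key relation $G+G^{\dagger}=-\sum_j V_j^{\dagger}V_j$, and collecting the $\mathcal{O}(\delta t^2)$ terms (including the $B_{j_1,j_2}$ contribution) should leave $I+\mathcal{O}(\delta t^3)$.

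\emph{One-step error.} For the weak local error I would Taylor expand the exact propagator, $e^{\delta t\mathcal{L}}\rho = \rho + \delta t\,\mathcal{L}\rho + \tfrac12\delta t^2\mathcal{L}^2\rho + \mathcal{O}(\delta t^3)$, expand $\mathbb{E}[\mathcal{A}\rho\mathcal{A}^{\dagger}]$ in the same way, and match powers of $\delta t$, ordering terms by $\delta\hat{W}\sim\mathcal{O}(\delta t^{1/2})$. Many cross terms drop out: the linear-in-$\delta\hat{W}$ terms vanish by $\mathbb{E}[\delta\hat{W}_j]=0$; the terms coupling a single increment to a double increment vanish because the relevant third moments are zero and $U$ is mean-zero and independent; and, by design, $\mathbb{E}[\delta\hat{W}_{j_1}\delta\hat{W}_{j_2}+U_{j_1,j_2}]=0$ for all $j_1,j_2$, which kills the constant-times-double cross terms. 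What remains is $A_0\rho A_0^{\dagger}$, which already reproduces $\rho+\delta t\,\mathcal{L}\rho$ at leading order via $G\rho+\rho G^{\dagger}=-i[H,\rho]-\tfrac12\sum_j\{V_j^{\dagger}V_j,\rho\}$; the diagonal single-noise term $\delta t\sum_j A_j\rho A_j^{\dagger}$, which supplies the gain term $\sum_j V_j\rho V_j^{\dagger}$; and the double-noise quadratic term, whose weight is $\mathbb{E}[(\delta\hat{W}_{j_1}\delta\hat{W}_{j_2}+U_{j_1,j_2})(\delta\hat{W}_{k_1}\delta\hat{W}_{k_2}+U_{k_1,k_2})]$. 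Assembling all $\mathcal{O}(\delta t^2)$ contributions and again using $G+G^{\dagger}=-\sum_j V_j^{\dagger}V_j$ should reconstruct $\tfrac12\delta t^2\mathcal{L}^2\rho$ exactly.

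\emph{Main obstacle.} The vanishing of the odd and mixed cross terms is routine; the genuine work is the $\mathcal{O}(\delta t^2)$ matching of the double-noise contribution against $\tfrac12\mathcal{L}^2\rho$, specifically the non-commuting products $V_{j_2}V_{j_1}\rho V_{k_1}^{\dagger}V_{k_2}^{\dagger}$. This is exactly where the engineered moments of $U$ enter: its antisymmetry $U_{j_1,j_2}=-U_{j_2,j_1}$ and variance $\mathbb{E}[U_{j_1,j_2}^2]=\delta t^2$ force the discrete pair $(\delta\hat{W}_{j_1}\delta\hat{W}_{j_2}+U_{j_1,j_2})$ to reproduce the second moments of the iterated It\^o integral $\int_0^{\delta t}\!\int_0^{s}\mathrm{d}W_{j_1}\,\mathrm{d}W_{j_2}$, thereby handling the L\'evy-area terms that distinguish $V_{j_2}V_{j_1}$ from $V_{j_1}V_{j_2}$; the fourth moment $\mathbb{E}[\delta\hat{W}_j^4]=3\delta t^2$ (matching the Gaussian value) is what closes the coincident-index terms. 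Conceptually this is precisely the weak-order-two consistency of the underlying scheme \cite[Chapter 15]{kloeden2011numerical}, transported to the density matrix through $\rho=\mathbb{E}[\ketbra{\psi}]$; tracking the index coincidences in the quadruple sum is the only laborious part, and I would organize it by separating the cases $\{j_1,j_2,k_1,k_2\}$ all equal, equal in pairs, and otherwise distinct.
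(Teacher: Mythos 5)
Your proposal is correct, and its analytical core coincides with the paper's: both unravel the QME to the stochastic Schr\"odinger equation, apply the weak second-order implicit scheme \eqref{eq: 2nd-order}, use independence of the step-$m$ noise to pass to $\rho_{m+1}=\mathbb{E}\big[\mathcal{A}\rho_m\mathcal{A}^{\dagger}\big]$, and establish the $\mathcal{O}(\delta t^3)$ one-step error by Taylor-matching against $e^{\delta t\mathcal{L}}$ with the moment identities of $\delta\hat W$ and $U$ together with the relation $G+G^{\dagger}=-\sum_j V_j^{\dagger}V_j$ (the paper phrases this as $G\rho+\rho G^{\dagger}=\mathcal{L}\rho-\sum_j V_j\rho V_j^{\dagger}$ in \cref{appendix: 2.0 scheme}). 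Where you genuinely differ is the construction of the Kraus form. The paper carries out the expectation explicitly — using precisely the moment facts you list, including the vanishing of all cross terms — and reads off a compact family of $1+N_V+N_V^2$ operators $F_0$, $F_j$, $F_{j_1+N_Vj_2}$ as in \eqref{2ndorder}, which it then simplifies via \eqref{eq: simplified2order}. You instead exploit the discreteness of the three-point and two-point variables to write the channel as a convex combination over noise realizations, $F_s=\sqrt{p_s}\,\mathcal{A}^{(s)}$. Your decomposition makes the existence of a Kraus form and complete positivity immediate, with no need to first verify that the cross expectations vanish, and you also verify the normalization $\sum_s F_s^{\dagger}F_s=I+\mathcal{O}(\delta t^3)$ by a direct computation (correctly reading the lemma's $\sum_j F_jF_j^{\dagger}$ as the trace-preservation combination; the paper leaves this point implicit, as it follows from consistency with the trace-preserving map $e^{\delta t\mathcal{L}}$). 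The trade-off is that your Kraus family has exponentially many members (one per realization of $\{\delta\hat W_{j,m}\}$ and $\{U_{j_1,j_2,m}\}$) and no closed form, whereas the paper's compact, explicit operators are exactly what the downstream algorithm consumes: the gradient formulas \eqref{eq:derivF2order} and the back-propagation in \cref{theorem: 1} require differentiating a small explicit family $F_j(\bm\theta)$. So your route proves \cref{lemma: 2.0} as stated, but the paper's explicit decomposition is the one that serves the rest of the method.
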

\begin{proof}
    Similar to Lemma 1, we define $\rho_m = \mathbb{E}[\ket{\psi_m}\bra{\psi_m}]$. 
Following the second-order implicit scheme \eqref{eq: 2nd-order}, each iteration  of  the density operator is as follows,
\begin{equation}\label{2ndorder}
\begin{aligned}
     \rho_{m+1} &= (I-\frac{1}{2}G\delta t)^{-1}(I+\frac{1}{2}G\delta t)\rho_m(I+\frac{1}{2}G\delta t)(I-\frac{1}{2}G\delta t)^{-1}   \\
     & +(I-\frac{1}{2}G\delta t)^{-1}\sum_{j=1}^{N_V}V_j(I+\frac{1}{2}G\delta t)\rho_m(I+\frac{1}{2}G\delta t)V_j^{\dag}(I-\frac{1}{2}G\delta t)^{-1}\delta t\\
     &+\frac{1}{2}(I-\frac{1}{2}G\delta t)^{-1}\sum_{j_1,j_2=1}^{N_V} V_{j_1}V_{j_2}\rho_m V_{j_2}^{\dag}V_{j_1}^{\dag}(I-\frac{1}{2}G\delta t)^{-1}(\delta t)^2\\
     & = \sum_{j=0}^{N_V^2+N_V}F_j\rho_m F_j^{\dag}=: \mathcal{K}(\rho_m)
\end{aligned}
\end{equation}
where the Kraus operators are given by, 
\begin{equation}
    \begin{aligned}
         F_0 &= (I-\frac{1}{2}G\delta t)^{-1}(I+\frac{1}{2}G\delta t),\\
        F_j & = (I-\frac{1}{2}G\delta t)^{-1}V_j(I+\frac{1}{2}G\delta t)\sqrt{\delta t}, j = 1,\cdots,N_V,\\
        F_{j_1+N_Vj_2} & = \frac{1}{\sqrt{2}}(I-\frac{1}{2}G\delta t)^{-1} V_{j_1}V_{j_2}\delta t, \quad j_1,j_2 =1,\cdots, N_V.
    \end{aligned}
\end{equation}

Notice that the one-step error is of the third order, so we can simplify the iteration formula without changing the error order by letting
\begin{equation}
\label{eq: simplified2order}
    \begin{aligned}
        F_{j_1+N_Vj_2}=\frac{1}{\sqrt{2}} V_{j_1}V_{j_2}\delta t,\quad j_1,j_2 =1,\cdots, N_V.
    \end{aligned}
\end{equation}
This will simplify the calculation of the gradient. The rest of the proof is given in \cref{appendix: 2.0 scheme}.

\end{proof}

We chose an implicit method due to its numerical stability. For example, one can show that the  spectral radius  of each Kraus operator has an upper bound that is independent of $H.$ This can be seen from the observation that the real part of the matrix $G$ is positive definite.  Therefore, the spectral radius of $F_0$ is less or equal to 1. For $0<j\leq N_V$, $F_j$  has  a spectral radius less or equal to the spectral radius of $V_j \sqrt{\delta t}$. \bl{The matrix inverse of $I-\frac{1}{2}G\delta t$ will certainly complicate the numerical implementation. One robust approach is to solve the associated linear system of equations by bi-conjugate gradient method (Bi-CGSTAB) \cite{sleijpen1993bicgstab}, which involves matrix-vector multiplication and orthogonalization.  On the other hand, if $H$ does not involve large eigenvalues, an explicit method can be used to replace \cref{eq: semi-euler,eq: 2nd-order} to simplify the implement, e.g., using the second-order It\^o-Taylor expansion \cite{kloeden2011numerical}.
 }

\subsection{Gradient Evaluations for Gradient-based Optimization}\label{sec: optm}

The problem of finding the optima of $\phi(\bm \theta)$ can be seen as the nonlinear least squares problem
\begin{equation}\label{sq}
    \phi(\bm \theta) = \frac{1}{2N_O N_T}\sum_{\overset{k=1,\cdots,N_O}{n = 1,\cdots,N_T}}|r_{k,n}(\bm \theta)|^2 = \frac{1}{2N_O N_T} R(\bm \theta)^TR(\bm \theta),
\end{equation}
where we used the textbook notations \cite{kelley1999iterative}:  $r_{k,n} = y_{k,n}-y_{k,n}^*$ is the residual error and   $R(\bm \theta) = (r_{k,n})_{k,n}$ will be referred to as \emph{the residual} vector. We are interested in the scenario when the dimension of the residual is larger than the number of parameters, i.e., the over-determined regime. 

One practical iteration method for solving the least squares  problem is the 
Levenberg-Marquardt (LM) algorithm, which, starting with an initial guess, $\bm \theta^{(0)} $, involves updating the parameters iteratively,
\begin{equation}\label{alg:LM}
   \bm \theta^{(k+1)}-\bm \theta^{(k)} = -\left(\nu_k I+R'(\bm \theta^{(k)})^TR'(\bm \theta^{(k)})\right)^{-1}R'(\bm \theta^{(k)})^TR(\bm \theta^{(k)}).
\end{equation}
It can be seen as a combination of the Gauss-Newton method and the gradient descent algorithm. In \cref{alg:LM}, $R'$ denotes the gradient of the residual with respect to the parameters. The parameter $\nu_k$ serves as a regularization, and in practice, it can be chosen to be proportional to the norm of the residual error.

The convergence of the LM method  has been established under very mild conditions. Here we follow \cite{yamashita2001rate,fan2005quadratic} and pose the following local condition: There exists a constant $C,$ such that,
\begin{equation}\label{ident}
    \norm{R(\bm \theta)} \geq C \norm{\bm \theta - \bm \theta^*}, 
\end{equation}
for all $\bm \theta $ in a neighborhood of $\bm \theta^*$. 
\begin{theorem} \cite[Theorem 2.1]{yamashita2001rate}
Assume that \cref{ident} holds and $R'$ is Lipschitz continuous.  Let the  parameter $\nu_k$ be \[\nu_k = \norm{R (\bm \theta^{(k)})}^2.\]  If the initial guess $\bm \theta^{(0)}$ is sufficiently close to $\bm \theta^*$, then there exists a constant, such that the iterations from \eqref{alg:LM} satisfy quadratic convergence,
\begin{equation}
    \norm{\bm \theta^{(k+1)}-\bm \theta^*} \leq C  \norm{\bm \theta^{(k)}-\bm \theta^*}^2.
\end{equation}
\label{theorem: quadratic convergence}
\end{theorem}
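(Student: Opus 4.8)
The plan is to follow the standard local-error-bound analysis of Levenberg--Marquardt, exploiting that the LM step is itself the exact minimizer of a regularized linear least squares problem. Write $e_k := \bm\theta^{(k)}-\bm\theta^*$ and abbreviate $J_k := R'(\bm\theta^{(k)})$ and $R_k := R(\bm\theta^{(k)})$. First I would assemble three structural facts valid in a neighborhood of $\bm\theta^*$: (i) at the true parameter the residual vanishes, $R(\bm\theta^*)=0$, so \cref{ident} reads $\nu_k = \norm{R_k}^2 \ge C^2\norm{e_k}^2$; (ii) since $R$ has a Lipschitz Jacobian it is locally Lipschitz, so $\norm{R_k}=\mathcal{O}(\norm{e_k})$ and hence $\nu_k=\mathcal{O}(\norm{e_k}^2)$; and (iii) the Taylor remainder bound $\norm{R_k + J_k(\bm\theta^*-\bm\theta^{(k)})} = \norm{R(\bm\theta^*) - R_k - J_k(\bm\theta^*-\bm\theta^{(k)})} \le \tfrac{L}{2}\norm{e_k}^2$, where $L$ is the Lipschitz constant of $R'$.

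Next I would invoke the variational characterization: the step $d_k=\bm\theta^{(k+1)}-\bm\theta^{(k)}$ defined by \cref{alg:LM} is the unique minimizer of $d\mapsto \norm{R_k + J_k d}^2 + \nu_k\norm{d}^2$. Testing this objective against the candidate step $\bar d_k := \bm\theta^*-\bm\theta^{(k)} = -e_k$ and using (iii) gives
\begin{equation}
  \norm{R_k + J_k d_k}^2 + \nu_k\norm{d_k}^2 \;\le\; \norm{R_k + J_k\bar d_k}^2 + \nu_k\norm{\bar d_k}^2 \;\le\; \tfrac{L^2}{4}\norm{e_k}^4 + \nu_k\norm{e_k}^2.
\end{equation}
Since $\nu_k=\mathcal{O}(\norm{e_k}^2)$ by (ii), the right-hand side is $\mathcal{O}(\norm{e_k}^4)$. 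The first summand on the left then yields the crucial bound $\norm{R_k+J_kd_k}=\mathcal{O}(\norm{e_k}^2)$, while the second summand together with $\nu_k\ge C^2\norm{e_k}^2$ from (i) yields $\norm{d_k}=\mathcal{O}(\norm{e_k})$.

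The final step is to bound the \emph{new residual} rather than the error directly. A second Taylor expansion at $\bm\theta^{(k+1)}=\bm\theta^{(k)}+d_k$ gives $\norm{R(\bm\theta^{(k+1)})} \le \norm{R_k + J_k d_k} + \tfrac{L}{2}\norm{d_k}^2$, and both terms on the right are $\mathcal{O}(\norm{e_k}^2)$ by the previous paragraph. Applying \cref{ident} at the point $\bm\theta^{(k+1)}$ then gives $\norm{e_{k+1}} \le C^{-1}\norm{R(\bm\theta^{(k+1)})} = \mathcal{O}(\norm{e_k}^2)$, which is the claimed estimate (with a constant depending on $C$ and $L$). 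Routing the argument through the residual this way is what avoids the genuinely delicate point, namely that the plain bound $\norm{\nu_k(\nu_k I+J_k^TJ_k)^{-1}}\le 1$ on the small-singular-value directions of $J_k$ is too weak to give quadratic contraction on its own.

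I expect the main obstacle to be not any single estimate but the bookkeeping that keeps every iterate inside the neighborhood of $\bm\theta^*$ on which \cref{ident} and the Lipschitz bound hold. This requires an induction: one chooses $\norm{e_0}$ small enough that the recursion $\norm{e_{k+1}}\le c\,\norm{e_k}^2$ both contracts and keeps $\bm\theta^{(k)}$ in the neighborhood for every $k$, so that the error bound may legitimately be invoked at $\bm\theta^{(k+1)}$ at each stage. A secondary subtlety worth flagging is the zero-residual assumption $R(\bm\theta^*)=0$ implicit in \cref{ident}; this is natural in the present setting, where at the true parameters the simulated trajectories reproduce the (noise-free) data exactly, but it should be stated explicitly.
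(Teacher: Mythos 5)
Your proposal is correct and follows essentially the same route as the paper's source: the paper gives no proof of its own but cites \cite[Theorem 2.1]{yamashita2001rate}, and the argument there is precisely the one you reconstruct --- the variational (minimizing) characterization of the LM step tested against the candidate $\bm\theta^*-\bm\theta^{(k)}$, the two resulting bounds $\norm{R_k+J_kd_k}=\mathcal{O}(\norm{\bm\theta^{(k)}-\bm\theta^*}^2)$ and $\norm{d_k}=\mathcal{O}(\norm{\bm\theta^{(k)}-\bm\theta^*})$, the error bound \eqref{ident} applied at the new iterate, and the induction keeping all iterates inside the neighborhood (the cited paper states the error bound via distance to the solution set rather than to the single point $\bm\theta^*$, an immaterial difference in this local setting). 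Your flag about the implicit zero-residual assumption $R(\bm\theta^*)=0$ is well taken but automatically satisfied here, since by the definition \eqref{eq: phi} the residual at the true parameter is $r_{k,n}(\bm\theta^*)=y_{k,n}^*-y_{k,n}^*=0$.
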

The convergence speed is faster than the super-linear convergence proved in \cite{kelley1999iterative}.  

The quadratic convergence property makes the LM method \eqref{alg:LM} an extremely useful algorithm. Global convergence has also been analyzed in \cite{yamashita2001rate,fan2005quadratic}. This is often accomplished by using a line search algorithm.   
\bl{
\begin{remark}
    The LM algorithm's fast convergence comes with the cost of computing the inverse of a matrix in \eqref{alg:LM}. One alternative, which is particularly useful for large scale problems,  is the gradient descent algorithm, 
    \[
     \bm \theta^{(k+1)}-\bm \theta^{(k)} = -\alpha_k R'(\bm \theta^{(k)})^TR(\bm \theta^{(k)}),
    \]
    where $\alpha_k$ is the learning rate.  Such an algorithm can often achieve linear convergence in a neighborhood of the minimizer \cite{nesterov2013introductory}.
\end{remark}
}

In parameter estimation problems, the condition in \cref{ident}  is viewed as a local identifiability condition \cite{brunel2008parameter}. Meanwhile, the Lipschitz condition can be verified by assuming that
the first order and second derivatives of $\mathcal{L}(\bm \theta)$ are bounded, which holds trivially if the Lindbladian has a linear dependence on the parameters.  
To elaborate on this, a partial derivative of the residual error is given by,
    \begin{equation}
       \frac{\partial}{\partial {\theta_\alpha} }  r_{k,n}(\bm \theta) = \text{tr}(A^{(k)}\partial_{\theta_\alpha}\rho(t_n;\bm \theta)). 
    \end{equation}
This leads us to consider  $\Gamma_\alpha(t,\bm \theta) := \partial_{\theta_\alpha}\rho(t;\bm \theta),$ which from \cref{eq: Lindblad1}, follows the differential equation,
\[
 \frac{d}{dt} \Gamma_\alpha(t,\bm \theta) = \mathcal{L}(\bm\theta) \Gamma_\alpha(t,\bm \theta) +  \frac{\partial}{\partial {\theta_\alpha} } \mathcal{L}(\bm \theta) \rho(t,\bm \theta). 
\]
Using the contraction property of $e^{t\mathcal{L}}$ \cite{ruskai1994beyond}, we find that, 
\[
\norm{\Gamma_\alpha(t,\bm \theta) } \leq t \norm{\partial_{\theta_\alpha}  \mathcal{L}(\bm \theta) }.
\]
To examine the Lipschitz continuity of the partial derivatives, we define,
\[ \chi_{\alpha,\beta} : =  \partial_{\theta_\beta} \Gamma_\alpha(t,\bm \theta), \]
which follows the equation, 
\[
 \frac{d}{dt} \chi_{\alpha,\beta} (t,\bm \theta) = \mathcal{L}(\bm\theta) \chi_{\alpha,\beta}(t,\bm \theta) +  \frac{\partial^2}{\partial \theta_\beta \partial \theta_\alpha  } \mathcal{L}(\bm \theta) \rho(t,\bm \theta) +   \frac{\partial}{\partial {\theta_\alpha} } \mathcal{L}(\bm \theta) \Gamma_\beta(t,\bm \theta) 
 +  \frac{\partial}{\partial {\theta_\beta} } \mathcal{L}(\bm \theta) \Gamma_\alpha(t,\bm \theta),
\]
and a simple bound follows,
\[
\norm{\chi_{\alpha,\beta}(t,\bm \theta)} \leq t \norm{\frac{\partial^2}{\partial \theta_\alpha \partial \theta_\beta  } \mathcal{L}(\bm \theta)} + \frac{t^2}{2} 
\big( 2 \norm{\partial_{\theta_\alpha}  \mathcal{L}(\bm \theta) } + \norm{\partial_{\theta_\beta}  \mathcal{L}(\bm \theta) } \big).
\]
As a result, if the first and second order derivatives of $\mathcal{L}(\bm\theta)$ are  bounded, $R'$ fulfills the Lipschitz condition.

\bigskip 


\medskip 

\bl{
Meanwhile, in our learning task, the residual function is subject to approximation error.  To incorporate these errors, we can follow the proof in \cite{yamashita2001rate}, where the search direction at each step of the LM algorithm involves a linear least-square (LS) problem. Based on the classical sensitivity analysis for LS \cite{van1974stability}, an $\mathcal{O}(\epsilon)$ perturbation of $R(\bm \theta)$ and $R'(\bm \theta)$ will introduce an $\mathcal{O}(\epsilon)$ error in the iteration formula. Therefore we have       }
\begin{prop}
Let $\epsilon >0.$    Suppose that the residual vector $R$ and $R'$ is subject to an $\epsilon$ perturbation, \[ R \to R+\Delta R,\] with $\norm{\Delta R} \leq \epsilon$ and $\norm{\Delta R'} \leq \epsilon$,   
    for all $\bm \theta$ in a neighborhood of $\bm \theta^*$.
    Then the LM iterations, under the same conditions on $R$ as in the previous theorem,  exhibit an approximate quadratic convergence,
    \begin{equation}
    \norm{\bm \theta^{(k+1)}-\bm \theta^*} \leq C  \norm{\bm \theta^{(k)}-\bm \theta^*}^2 + \epsilon.
\end{equation}
\bl{where $C$ is independent of $\epsilon$.}
\end{prop}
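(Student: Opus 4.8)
The plan is to retrace the argument of Yamashita and Fan and to isolate the single place where the exact interpolation $R(\bm\theta^*)=0$ is invoked, replacing it by the weaker statement forced by the perturbation. Write $e_k := \bm\theta^{(k)} - \bm\theta^*$ and $d_k := \bm\theta^{(k+1)} - \bm\theta^{(k)}$, and let $\tilde R := R + \Delta R$, $\tilde J := R' + \Delta R'$ denote the perturbed residual and its Jacobian that the iteration \eqref{alg:LM} actually uses, with regularizer $\nu_k = \norm{\tilde R(\bm\theta^{(k)})}^2$. Since $\Delta R'$ is Lipschitz, $\tilde J$ is Lipschitz, and a first-order Taylor expansion gives $\tilde R(\bm\theta^{(k)}) - \tilde J(\bm\theta^{(k)}) e_k = \tilde R(\bm\theta^*) + O(\norm{e_k}^2)$.

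Next I would exploit that $d_k$ is, by construction, the exact minimizer of the regularized model $d \mapsto \norm{\tilde R_k + \tilde J_k d}^2 + \nu_k\norm{d}^2$. Comparing its value at $d_k$ with its value at the ideal correction $-e_k$ yields the fundamental inequality $\norm{\tilde R_k + \tilde J_k d_k}^2 + \nu_k\norm{d_k}^2 \le \norm{\tilde R_k - \tilde J_k e_k}^2 + \nu_k\norm{e_k}^2$. This is exactly where the perturbation enters: in the unperturbed proof one uses $R(\bm\theta^*) = 0$ to bound $\norm{R_k - J_k e_k} \le L\norm{e_k}^2$, whereas now $\tilde R(\bm\theta^*) = \Delta R(\bm\theta^*)$ has norm at most $\epsilon$, so the Taylor estimate only gives $\norm{\tilde R_k - \tilde J_k e_k} \le \epsilon + L\norm{e_k}^2$. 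Feeding this into the inequality, together with the two-sided residual bounds $C\norm{e_k}-\epsilon \le \norm{\tilde R_k} \le C_2\norm{e_k}+\epsilon$ (the lower one from the identifiability condition \eqref{ident}, the upper one from boundedness of the Jacobian), produces a model-residual estimate $\norm{\tilde R_k + \tilde J_k d_k} \lesssim \epsilon + \norm{e_k}^2$ and a step-length bound $\norm{d_k} \lesssim \norm{e_k} + \epsilon$.

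Finally I would convert the residual estimate back into an error estimate through the identifiability assumption, $\norm{e_{k+1}} \le C^{-1}\norm{R(\bm\theta^{(k+1)})}$, Taylor-expanding $R(\bm\theta^{(k)}+d_k)$ about $\bm\theta^{(k)}$ and writing $R_k + J_k d_k = (\tilde R_k + \tilde J_k d_k) - \Delta R_k - \Delta J_k d_k$. The quadratic remainder $O(\norm{d_k}^2)$ and the model residual together contribute $O(\norm{e_k}^2)$, while $\Delta R_k$ and $\Delta J_k d_k$ contribute a single additive $O(\epsilon)$; after absorbing constants (shrinking the neighborhood of $\bm\theta^*$ and using $\epsilon\norm{e_k}\le\epsilon$ there) one obtains $\norm{\bm\theta^{(k+1)}-\bm\theta^*} \le C\norm{\bm\theta^{(k)}-\bm\theta^*}^2 + \epsilon$, which is the claim.

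I expect the main obstacle to be the bookkeeping of the cross-terms between $\epsilon$ and powers of $\norm{e_k}$, and relatedly the behaviour near the noise floor $\norm{e_k}\sim\epsilon$. Because $\nu_k = \norm{\tilde R_k}^2$ can there be as small as $O(\epsilon^2)$, the division by $\nu_k$ in the step-length estimate is delicate, and one must verify that the iterate cannot escape the $O(\epsilon)$ ball rather than keep contracting quadratically. A second delicate point is the appearance of the Jacobian perturbation $\Delta J = \Delta R'$ when the step computed from $\tilde R$ is related to the unperturbed residual $R$ used in the identifiability bound; keeping this contribution additive is precisely what the Lipschitz hypothesis on $\Delta R'$ is there to guarantee, which is why that hypothesis cannot be dropped. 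Away from the noise floor, when $\norm{e_k}\gg\epsilon$, all perturbation terms are genuinely of lower order and the original quadratic contraction is recovered, so the two regimes together yield the stated approximate quadratic rate.
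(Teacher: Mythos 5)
Your proposal is correct in strategy and takes exactly the route the paper intends: the paper's entire proof of this proposition is the remark that it can be proved by following the proof in \cite{yamashita2001rate}, and your argument is precisely that proof with the perturbation tracked through, the only substantive change being the replacement of $R(\bm\theta^*)=0$ by $\norm{R(\bm\theta^*)+\Delta R(\bm\theta^*)}\le\epsilon$ in the Taylor expansion, exactly as you identify. The one point you flag but leave open --- the step-length estimate near the noise floor, where the lower bound $\nu_k\ge\left(C\norm{\bm\theta^{(k)}-\bm\theta^*}-\epsilon\right)^2$ becomes vacuous --- can be closed by observing that \eqref{ident} together with $R(\bm\theta^*)=0$ forces $\sigma_{\min}\big(R'(\bm\theta^*)\big)\ge C$ (expand $R$ about $\bm\theta^*$ along each unit direction), so that near $\bm\theta^*$, and for $\epsilon$ small (since $\norm{\Delta R}\le\epsilon$ plus Lipschitz continuity of $\Delta R'$ gives $\norm{\Delta R'}=\mathcal{O}(\sqrt{\epsilon})$ by a standard finite-difference argument), the normal equations give $\norm{\bm\theta^{(k+1)}-\bm\theta^{(k)}}\le\norm{R+\Delta R}/\sigma_{\min}(R'+\Delta R')=\mathcal{O}\big(\norm{\bm\theta^{(k)}-\bm\theta^*}+\epsilon\big)$ directly, which keeps the iterates inside the $\mathcal{O}(\epsilon)$ ball without dividing by $\nu_k$.
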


\medskip

The LM algorithm in \cref{alg:LM} requires access to the gradient of the residual function. To facilitate the calculation of the gradient, we first take the derivative of the Kraus form,

\begin{equation}
    (\partial_{\theta}\mathcal{K})[\rho]  = \sum_{j}\partial_{\theta}F_j\rho F_j^{\dag}+F_j\rho\partial_{\theta}F_j^{\dag}
\end{equation}
Here $\theta$ refers to one parameter in $\bm \theta$. The entire gradient can be computed by visiting all the components in $\bm \theta$. For clarity, we express the results as an inner product in the space of Hermitian matrices. Namely, for any Hermitian matrices $A$ and $B$, we define,
\begin{equation}
    \big\langle A, B \big\rangle:= \tr(AB).
\end{equation}
Now we show how the gradient of the objective function can be computed from a back propagation procedure. 

\begin{lemma}
Assume that the iteration $\rho_{m} \to \rho_{m+1} $ follows a Kraus form,
\begin{equation}
    \rho_{m+1} = \mathcal{K} \rho_m:= \sum_{j} F_j \rho_m F_j^{\dagger}.
\end{equation}
 Then, for any Hermitian operator $A$, the following identify holds,
\begin{equation}
 \big\langle A, \rho_{m+1} \big\rangle=\big\langle \mathcal{K}^* A, \rho_m  \big\rangle,
\end{equation}
where, $\mathcal{K}^*$ stands for the adjoint of $\mathcal{K}$. Namely,
\begin{equation}
    \mathcal{K}^*[A]:= \sum_{j} F_j^\dagger A F_j.
\end{equation}
\end{lemma}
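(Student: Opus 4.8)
The plan is to establish this identity by a direct computation resting only on the linearity and the cyclic invariance of the trace. First I would expand the left-hand side using the definition of the inner product together with the Kraus form, writing
\[
\langle A, \rho_{m+1} \rangle = \tr(A \rho_{m+1}) = \tr\Big( A \sum_{j} F_j \rho_m F_j^{\dagger} \Big).
\]
Next I would pull the finite sum outside the trace by linearity, reducing the expression to $\sum_{j} \tr(A F_j \rho_m F_j^{\dagger})$, so that each summand is a single trace of a product of four operators.

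The key step is then to apply the cyclic property of the trace to each summand, moving the rightmost factor $F_j^{\dagger}$ to the front, which turns $\tr(A F_j \rho_m F_j^{\dagger})$ into $\tr(F_j^{\dagger} A F_j \rho_m)$. Reassembling the sum inside the trace and recognizing the definition of the adjoint map $\mathcal{K}^*$ yields
\[
\sum_{j} \tr(F_j^{\dagger} A F_j \rho_m) = \tr\Big( \big( \sum_{j} F_j^{\dagger} A F_j \big) \rho_m \Big) = \tr(\mathcal{K}^*[A] \rho_m) = \langle \mathcal{K}^* A, \rho_m \rangle,
\]
which completes the argument.

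There is essentially no substantive obstacle here; the computation is routine. The only point I would flag is consistency with the earlier setup: the pairing $\langle \cdot, \cdot \rangle$ was introduced on Hermitian matrices, so I would add a one-line remark that $\mathcal{K}^*[A]$ is again Hermitian whenever $A$ is, since $(F_j^{\dagger} A F_j)^{\dagger} = F_j^{\dagger} A^{\dagger} F_j = F_j^{\dagger} A F_j$. This guarantees that the right-hand side $\langle \mathcal{K}^* A, \rho_m \rangle$ lives within the stated framework, even though the underlying trace identity itself is valid for arbitrary operators $A$.
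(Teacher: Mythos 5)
Your proof is correct and follows exactly the same route as the paper's: expand $\tr(A\rho_{m+1})$ by linearity over the Kraus sum, apply the cyclic property of the trace to each term, and reassemble into $\tr(\mathcal{K}^*[A]\,\rho_m)$. The added remark that $\mathcal{K}^*[A]$ remains Hermitian is a small but sensible bonus that the paper omits.
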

\begin{proof}
    By the cyclic property of the trace operator, we have
    \begin{equation*}
        \begin{aligned}
            \tr(A\rho_{m+1}) = \sum_j \tr(AF_j\rho_m F_j^{\dag}) = \sum_j \tr(F_j^{\dag} AF_j\rho_m) = \tr(\mathcal{K}^*[A]\rho_m)
        \end{aligned}
    \end{equation*}
\end{proof}

In light of this Lemma, we have, 

\begin{theorem}
\label{theorem: 1}
When the density operator is simulated by the form of $\rho_{m+1} = \mathcal{K}[\rho_m] :=\sum_j F_j\rho_m F_j^{\dag}$ for each simulation time step $\delta t$, the explicit form of the derivative of the objective function $\phi(\bm \theta)$ can be written as 
\begin{equation}\label{deriv}
\partial_{\theta_{\alpha}}\phi(\bm \theta) = \frac{1}{N_O N_T} \sum_{k = 1}^{N_O}\sum_{n=1}^{N_T}\sum_{l=1}^{nL}\left(y_{k,n}-y_{k,n}^*\right) \Big\langle\partial_{\theta_\alpha}\mathcal{K}^*A^{(k)}_{nL-l}, \;\rho_{nL-l}\Big\rangle,
\end{equation} 
where $A^{(k)}_{nL-l}:=\big(\mathcal{K}^*\big)^{l-1}[A^{(k)}]$ can be viewed as a back-propagated operator.
\end{theorem}
\begin{proof}
    The detailed proof can be found in \cref{appendix: proof thm1}.
\end{proof}

By using \cref{theorem: 1}, the explicit expression of $\partial_{\theta_{\alpha}}\phi(\bm \theta)$ can be obtained once we know the derivative of $F_j$. For instance, for the first-order semi-implicit Euler method,
\begin{equation}
\label{eq:derivF1order}
\begin{aligned}
    \partial_{\theta} F_0 &= F_0\partial_{\theta}G F_0\delta t    \\
    \partial_{\theta}F_j &= \partial_{\theta}F_0V_j\sqrt{\delta t}+F_0\partial_{\theta}V_j\sqrt{\delta t}, j = 1,\cdots,N_V \\
\end{aligned}
\end{equation}
Similarly, for the second-order implicit approximation \eqref{2ndorder}, we have,
\begin{equation}
\label{eq:derivF2order}
    \begin{aligned}
        &\partial_{\theta}F_0 = \frac{1}{2}\delta t(I-\frac{1}{2}G\delta t)^{-1}\partial_{\theta}G(F_0+I)\\
        &\partial_{\theta}F_j = \frac{1}{2}\delta t(I-\frac{1}{2}G\delta t)^{-1}\partial_{\theta}GF_j+\sqrt{\delta t}(I-\frac{1}{2}G\delta t)^{-1}\partial_{\theta}V_j(I+\frac{1}{2}G\delta t) \\
        &+\frac{1}{2}\delta t^{3/2}(I-\frac{1}{2}G\delta t)^{-1}V_j\partial_{\theta}G,  \quad j=1,\cdots,N_V\\
        &\partial_{\theta}F_{j_1+N_Vj_2} = \frac{\delta t}{\sqrt{2}}(\partial_{\theta}V_{j_1}V_{j_2}+V_{j_1}\partial_{\theta}V_{j_2}),\quad j_1,j_2= 1,\cdots,N_V\\
    \end{aligned}
\end{equation}

{\color{black}
We now discuss the time complexity of the overall learning algorithm. 
Notice that simulating Lindblad dynamics for the time duration $[0,T]$, due to the second order accuracy,  incurs complexity that is proportional to the number of time steps, i.e., $LN_T = \mathcal{O}\left(\epsilon^{-1/2}T^{3/2}\right)$. Meanwhile, the quadratic convergence property of the Levenberg-Marquardt algorithm implies that the number of iteration steps is of the order $\mathcal{O}(\log \log \epsilon^{-1})$ such that the optimization error is within $\epsilon$. In addition, with direct calculation, we find that the time complexity of calculating the objective function value and its gradient is $\mathcal{O}(\epsilon^{-1/2}T^{3/2}{}N_V^2N_{MM})$ and  $\mathcal{O}(N_{\theta}N_O\epsilon^{-1/2}T^{3/2}N_V^2N_{MM})$, respectively, where $N_{MM}$ denotes the complexity associated with a matrix multiplication, e.g., $F_j \rho F_j^\dag$.  Thus the overall complexity is $ \Tilde{\mathcal{O}}(N_{\theta}N_O\epsilon^{-1/2}T^{3/2}N_V^2N_{MM})$ where $\Tilde{\mathcal{O}}$ ignores logarithmic factors. 

\begin{theorem}
 Given the data sets $\{y_{k,n}^*, k = 1,\cdots,N_O,n = 1,\cdots,N_T\}$ representing measurements with respect to observables $A^{(k)}$ and time instance $t_n$ until  $T=t_{N_T}$. Under the assumption in \cref{theorem: quadratic convergence} and given the observation data and precision $\epsilon$, the learning algorithm yields parameters $\bm \theta$ within  $\epsilon$ precision and it involves  $\Tilde{\mathcal{O}}(N_{\theta}N_O\epsilon^{-1/2}T^{3/2}N_V^2N_{MM})$ arithmetic operations. 
  For a specific $n$-qubit open quantum system described by Lindblad master equation, where the Hamiltonian $H$ is $k$-local and the dissipation part $\mathcal{L}$ only contains single qubit terms, 
    the overall complexity of the learning algorithm based on one and two-qubit Pauli observables  is
    \begin{equation*}
        \tilde{\mathcal{O}}\left(\epsilon^{-1/2}n^5T^{3/2}N_{MM}\right).
    \end{equation*}
\end{theorem}
Implicit in the final bound  in the above theorem is a $k$-dependent prefactor, which depends on the structure of the locality terms. The dependence of the bound on $n$
comes from the fact that for this specific quantum system, we have $N_O= \mathcal{O}(n^2),$ $N_\theta= \mathcal{O}({n}),$ and $N_V= \mathcal{O}(n)$. 
}

\subsection{Quantifying the Error in the Parameter Identification}\label{sec: error}

Similar to modern machine learning problems, 
there are mainly three sources of error in a quantum learning problem.  Specifically, as can be seen from  the objective function \cref{eq: phi}, 
\begin{enumerate}
    \item The estimated values, $\text{tr} \big( A^{(k)}  \rho(t_{n} ;\bm \theta) \big)$  are replaced with 
    $\text{tr} \big( A^{(k)}  \rho_{nL}(\bm \theta) \big)$ (Recall that $\Delta t = L \delta t $, and $\rho(t_n,\bm \theta) = \rho(\tau_{Ln},\bm \theta) \approx \rho_{Ln}(\bm \theta)$). This can be regarded as a function approximation error.  To include this error, we define,
    \begin{equation}\label{eq: phi1}
    \widehat{r}_{k,n}(\bm \theta) = \text{tr} \big( A^{(k)}  \rho_{nL}(\bm \theta) \big)  - y_{k,n}^*. 
    \end{equation}

    Using the second-order method \eqref{2ndorder}, $ \widehat{R}=(\widehat{r}_{k,n})_{k,n}$ is a $\mathcal{O}(\delta t^2)$ perturbation of the residual error in the original objective function \eqref{eq: phi}. 
    \item The data $y_{k,n}^*$, as indicated in \eqref{ykn}, come from measuring a set of observables at different time instances, which are subject to statistical error. To clarify this perspective,  we express the measurement values as random variables:
    \begin{equation}
        y_{k,n}^* \approx \hat{y}_{k,n} := \frac{1}{N_S} \sum_{\ell=1}^{N_S} a_{k,n,\ell}.
    \end{equation}
$N_S$ indicates the number of times the measurements are repeated.    The effect of this sampling error can be understood by considering the following objective function,
     \begin{equation}\label{eq: phi2}
    \widetilde{r}_{k,n}(\bm \theta) =   \text{tr} \big( A^{(k)}  \rho(t_n,\bm \theta)\big) -  \hat{y}_{k,n}. 
    \end{equation}
Based on Chebyshev's inequality,  $\widetilde{R}=(\widetilde{r}_{k,n})_{k,n}$ is a $\epsilon$ perturbation of residual error in the original objective function \eqref{eq: phi} with high probability if $N_S=\mathcal{O}(1/\epsilon^2)$. 

    \item The optimization problem \eqref{eq: minphi} may lead to a nonlinear system of equations and we may not be able to find the exact solution after a finite number of iterations. This is the optimization error.  

\end{enumerate}

The perturbation caused by the numerical approximation \eqref{2ndorder} is a deterministic perturbation, which from \eqref{ident}, causes a perturbation of the identified parameter. 
\begin{theorem}
    Under the assumption \eqref{ident}, 
    there exists a $\delta_0>0,$ such that for all $\delta t < \delta_0$,
    there is a minimizer $\hat{\bm \theta}$ of the least squares  problem defined by the residual vector in   \cref{eq: phi1}, such that 
    \begin{equation}
        \norm{ \hat{\bm \theta}-\bm \theta^* } \leq C \delta t^2. 
    \end{equation}
    for some constant $C$.
\end{theorem}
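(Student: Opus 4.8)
The plan is to treat this as a standard perturbation estimate for a nonlinear least squares problem, where the discretization error of the second-order scheme plays the role of the perturbation $\Delta R$ from the preceding proposition. Write $\widehat{R} = (\widehat{r}_{k,n})_{k,n}$ for the residual assembled from \cref{eq: phi1} and $R = (r_{k,n})_{k,n}$ for the exact residual in \cref{eq: phi}, and set $\epsilon := \norm{\widehat{R} - R}$ (the supremum taken over the neighborhood where \cref{ident} holds). To avoid a clash with the constant $C$ in the statement, let $C_1$ denote the identifiability constant, so that $\norm{R(\bm\theta)} \geq C_1 \norm{\bm\theta - \bm\theta^*}$ for $\bm\theta \in B(\bm\theta^*,\rho_0)$. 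Since $R(\bm\theta^*) = 0$ by definition of $\bm\theta^*$, the whole argument reduces to controlling $\epsilon$ and then inverting this identifiability bound.

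First I would establish a \emph{uniform} bound $\epsilon \leq C\delta t^2$ over the neighborhood. Componentwise, $\widehat{r}_{k,n}(\bm\theta) - r_{k,n}(\bm\theta) = \tr\big(A^{(k)}(\rho_{nL}(\bm\theta) - \rho(t_n;\bm\theta))\big)$, so since the observables $A^{(k)}$ are fixed it suffices to bound $\norm{\rho_{nL}(\bm\theta) - \rho(t_n;\bm\theta)}$. \cref{lemma: 2.0} supplies a one-step local error of order $\delta t^3$; telescoping over the $nL \leq N_T L$ steps and using the contractivity of the discrete map $\mathcal{K}$ (which, as noted after \cref{lemma: 2.0}, follows from the real part of $G$ being positive definite) upgrades this to a global error of order $\delta t^2$. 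Because $N_O$, $N_T$, the $A^{(k)}$, and the time horizon are all fixed, the resulting constant can be taken uniform over the compact neighborhood, giving $\epsilon \leq C\delta t^2$.

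Next I would produce a nearby minimizer by a soft compactness argument. On the boundary of a ball $B(\bm\theta^*,\rho)$, the triangle inequality and \cref{ident} give $\norm{\widehat{R}(\bm\theta)} \geq \norm{R(\bm\theta)} - \epsilon \geq C_1\rho - \epsilon$, whereas at the center $\norm{\widehat{R}(\bm\theta^*)} = \norm{\widehat{R}(\bm\theta^*) - R(\bm\theta^*)} \leq \epsilon$. Choosing $\rho$ with $2\epsilon/C_1 < \rho \leq \rho_0$ makes the perturbed objective strictly larger on $\partial B$ than at $\bm\theta^*$, so by continuity and the Weierstrass theorem it attains its minimum over the closed ball at an interior point $\hat{\bm\theta}$, which is therefore a local minimizer. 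The requirement $2\epsilon/C_1 < \rho_0$ translates, via $\epsilon \leq C\delta t^2$, into the smallness condition $\delta t < \delta_0$ with $\delta_0 = \sqrt{C_1\rho_0/(2C)}$. For the quantitative estimate, $\norm{\widehat{R}(\hat{\bm\theta})} \leq \norm{\widehat{R}(\bm\theta^*)} \leq \epsilon$, hence $\norm{R(\hat{\bm\theta})} \leq \norm{\widehat{R}(\hat{\bm\theta})} + \epsilon \leq 2\epsilon$, and feeding this into \cref{ident} yields $C_1\norm{\hat{\bm\theta} - \bm\theta^*} \leq 2\epsilon \leq 2C\delta t^2$, which is the claim after renaming constants.

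The main obstacle is the uniform global-error bound in the first step. \cref{lemma: 2.0} only records one-step consistency, and promoting it to a global estimate whose constant is independent of $\bm\theta$ requires that the discrete stability of $\mathcal{K}$ hold uniformly over the neighborhood and that the per-step errors accumulate only linearly in the number of steps rather than amplifying through the iteration. Once this is in hand, everything else is a topological existence argument together with a single application of the identifiability inequality.
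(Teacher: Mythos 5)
Your proof is correct and follows exactly the route the paper intends: the paper in fact states this theorem without any proof, merely asserting beforehand that $\widehat{R}$ is an $\mathcal{O}(\delta t^2)$ perturbation of $R$ (via the second-order scheme) and that \eqref{ident} turns this into a perturbation of the identified parameter. Your argument---the uniform global error bound obtained from the one-step estimate of \cref{lemma: 2.0} together with discrete stability, the boundary-comparison/Weierstrass argument producing an interior minimizer $\hat{\bm\theta}$ of the perturbed problem, and the final inversion of \eqref{ident} giving $\norm{\hat{\bm\theta}-\bm\theta^*}\leq (2C/C_1)\,\delta t^2$---supplies precisely the details the paper omits, and it is sound.
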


We now turn to the measurement error. Since the observables $A^{(k)}$'s are bounded, the measurement noise is bounded, and they can be regarded as a sub-Gaussian random variable. The Hoeffding inequality \cite{jin2019short} implies that, 
\begin{lemma}
  There exists $\sigma_{k,n}$, such that  the following inequality holds for every $\epsilon>0$
 \begin{equation}\label{subg-ci}
      \mathbb{P}\Big( \abs{ \hat{y}_{k,n} -  {y}_{k,n}^*  } > \epsilon  \Big)
\leq 2 e^{-\frac{N_S \epsilon^2}{ \sigma_{k,n}^2 }}.
 \end{equation}
\end{lemma}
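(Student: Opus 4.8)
The plan is to recognize $\hat{y}_{k,n} = \frac{1}{N_S}\sum_{\ell=1}^{N_S} a_{k,n,\ell}$ as the sample mean of independent, identically distributed, bounded random variables, and then apply Hoeffding's inequality directly. The statement \eqref{subg-ci} is a standard concentration bound, so the work lies almost entirely in setting up the probabilistic model of the measurement correctly.

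First I would fix the indices $k$ and $n$ and identify the law of each individual outcome $a_{k,n,\ell}$. Each $a_{k,n,\ell}$ is the result of a single projective measurement of the observable $A^{(k)}$ on an independently prepared copy of the state $\rho(t_n;\bm\theta^*)$. By the Born rule, the possible values of $a_{k,n,\ell}$ are exactly the eigenvalues of $A^{(k)}$, occurring with probabilities given by the corresponding diagonal entries of $\rho(t_n;\bm\theta^*)$ in the eigenbasis of $A^{(k)}$. Two facts follow at once: the outcomes indexed by $\ell$ are i.i.d., since the measurements act on identically prepared, independent copies; and their common mean is $\mathbb{E}[a_{k,n,\ell}] = \tr(A^{(k)}\rho(t_n;\bm\theta^*)) = y_{k,n}^*$, which is exactly the quantity subtracted in \eqref{subg-ci}.

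Next I would control the range of the outcomes. Writing $\lambda_{\min}^{(k)}$ and $\lambda_{\max}^{(k)}$ for the smallest and largest eigenvalues of the Hermitian operator $A^{(k)}$, every realization satisfies $a_{k,n,\ell}\in[\lambda_{\min}^{(k)},\lambda_{\max}^{(k)}]$ almost surely; this interval is well defined and finite because the observables are bounded. Boundedness is precisely the hypothesis of Hoeffding's lemma, so each centered variable $a_{k,n,\ell}-y_{k,n}^*$ is sub-Gaussian with a parameter governed by the spectral spread $\lambda_{\max}^{(k)}-\lambda_{\min}^{(k)}$. Applying the two-sided Hoeffding bound to the centered sample mean gives
\[
 \mathbb{P}\big(\abs{\hat{y}_{k,n}-y_{k,n}^*}>\epsilon\big) \leq 2\exp\!\left(-\frac{2N_S\epsilon^2}{(\lambda_{\max}^{(k)}-\lambda_{\min}^{(k)})^2}\right),
\]
and the claim follows by setting $\sigma_{k,n}^2 := \tfrac{1}{2}(\lambda_{\max}^{(k)}-\lambda_{\min}^{(k)})^2$, which is finite precisely because $A^{(k)}$ is bounded.

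I do not anticipate a genuine obstacle in the concentration estimate itself; the only step requiring care is the probabilistic modeling of the quantum measurement, namely justifying that repeated measurements yield genuinely independent outcomes with mean $y_{k,n}^*$ and a deterministic, bounded range. Once this is in place, identifying the explicit constant $\sigma_{k,n}$ in terms of the eigenvalue spread of $A^{(k)}$ is routine bookkeeping.
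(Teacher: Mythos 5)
Your proposal is correct and follows essentially the same route as the paper: the paper's argument is exactly that boundedness of the observables $A^{(k)}$ makes the measurement outcomes bounded (hence sub-Gaussian), and then Hoeffding's inequality yields \eqref{subg-ci}. Your version simply fills in the details the paper leaves implicit — the Born-rule model of i.i.d. outcomes with mean $y_{k,n}^*$ and the explicit constant $\sigma_{k,n}^2 = \tfrac12\bigl(\lambda_{\max}^{(k)}-\lambda_{\min}^{(k)}\bigr)^2$ — which is a welcome sharpening but not a different argument.
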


Now by using a union bound, we can estimate the probability,
\[
 \mathbb{P}\Big(  \sqrt{\sum_{k,n}\abs{ \hat{y}_{k,n} -  {y}_{k,n}^*  }^2 } > \epsilon  \Big)
\leq 2 N_ON_T e^{-\frac{N_S \epsilon^2}{ N_O N_T \sigma^2 }}. 
\]
Here we have set $\sigma^2 = \frac{1}{N_O N_T}\sum_{k,n} \sigma_{k,n}^2.$

\begin{theorem}
    Under the same assumption as in the previous theorem, and further assume that the observation data $y_{k,n}$ are sampled,
    \begin{equation}
        N_S= \Omega\left( \frac{\sigma^2 N_ON_T}{\epsilon^2} \log \big(N_ON_T\big)\right),
    \end{equation}
    times, 
then       there is a minimizer $\widetilde{\bm \theta}$ of the nonlinear least squares with residual in \cref{eq: phi2}, such that 
    \begin{equation}
        \norm{ \widetilde{\bm \theta}-\bm \theta^* } \leq \epsilon. 
    \end{equation}
 with high probability.
\end{theorem}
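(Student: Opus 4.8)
The plan is to condition on a high-probability event under which the statistical perturbation is controlled, and then reduce the claim to a purely deterministic existence argument based on the identifiability condition \eqref{ident}. The crucial structural observation is that the sampling error enters \eqref{eq: phi2} only through the fixed data $\hat y_{k,n}$, so the difference between $\widetilde R$ and the exact residual $R$,
\[
\Delta R := \widetilde R(\bm\theta)- R(\bm\theta),\qquad (\Delta R)_{k,n}= y_{k,n}^*-\hat y_{k,n},
\]
is \emph{independent} of $\bm\theta$. Hence $\widetilde R(\bm\theta)=R(\bm\theta)+\Delta R$ is merely a rigid translation of the exact residual, and in particular $\widetilde R'=R'$, so the Lipschitz hypothesis of the earlier proposition is inherited for free.

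First I would dispatch the probabilistic step. Combining the Hoeffding bound in the preceding lemma with the union bound displayed just above it gives
\[
\mathbb P\big(\norm{\Delta R}>\epsilon\big)\le 2N_ON_T\,e^{-N_S\epsilon^2/(N_ON_T\sigma^2)}.
\]
Choosing $N_S=\Omega\!\big(\sigma^2 N_ON_T\epsilon^{-2}\log(N_ON_T)\big)$ with a sufficiently large absolute constant drives the right-hand side below any prescribed failure probability, so that $\norm{\Delta R}\le\epsilon$ holds with high probability. From here on I fix a realization on this event and argue deterministically.

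Next comes the existence argument. Since $R(\bm\theta^*)=0$ by the definition of $y_{k,n}^*$, we have $\widetilde R(\bm\theta^*)=\Delta R$ and thus $\norm{\widetilde R(\bm\theta^*)}\le\epsilon$. On a sphere $\norm{\bm\theta-\bm\theta^*}=r$ contained in the identifiability neighborhood, the triangle inequality together with \eqref{ident} yields
\[
\norm{\widetilde R(\bm\theta)}\ge \norm{R(\bm\theta)}-\norm{\Delta R}\ge Cr-\epsilon.
\]
Taking $r=2\epsilon/C$ makes the value on this sphere strictly exceed the central value $\epsilon$. Because $\widetilde\phi=\tfrac1{2N_ON_T}\norm{\widetilde R}^2$ is continuous, Weierstrass guarantees a minimizer $\widetilde{\bm\theta}$ of $\widetilde\phi$ over the closed ball, and the boundary comparison forces it into the interior, so it is a genuine local minimizer of the nonlinear least squares problem. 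Finally, $\norm{\widetilde R(\widetilde{\bm\theta})}\le\norm{\widetilde R(\bm\theta^*)}\le\epsilon$, whence
\[
C\norm{\widetilde{\bm\theta}-\bm\theta^*}\le\norm{R(\widetilde{\bm\theta})}\le \norm{\widetilde R(\widetilde{\bm\theta})}+\norm{\Delta R}\le 2\epsilon,
\]
using \eqref{ident} a second time; absorbing $2/C$ into the $\Omega$-notation (equivalently rescaling $\epsilon$) gives $\norm{\widetilde{\bm\theta}-\bm\theta^*}\le\epsilon$.

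The main obstacle is the existence step rather than the tail estimate: condition \eqref{ident} is only assumed locally, so one must check that the comparison radius $r=2\epsilon/C$ keeps the whole ball inside that neighborhood, which imposes an implicit smallness threshold on $\epsilon$ (the analogue of the threshold $\delta_0$ in the previous theorem, and met by taking $N_S$ large). A secondary caveat is that the minimizer produced here is only local; upgrading it to a global minimizer would need a quantitative lower bound on $\widetilde\phi$ outside the neighborhood, which \eqref{ident} alone does not supply and which I would not pursue.
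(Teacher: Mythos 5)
Your proposal is correct and takes exactly the route the paper intends: the paper states this theorem without an explicit proof, immediately after assembling the two ingredients you use (the Hoeffding/union bound controlling the $\bm\theta$-independent perturbation $\Delta R$, and the identifiability condition \eqref{ident}), and your deterministic existence argument simply fills in the step the paper leaves implicit. One small slip: with $r=2\epsilon/C$ the boundary bound $Cr-\epsilon=\epsilon$ only matches, rather than strictly exceeds, the central bound $\norm{\widetilde R(\bm\theta^*)}\le\epsilon$, so take $r=3\epsilon/C$ (or observe that in the equality case $\bm\theta^*$ itself is an interior minimizer) to force a minimizer into the interior; your final estimate $\norm{\widetilde{\bm\theta}-\bm\theta^*}\le 2\epsilon/C$ and the rescaling of $\epsilon$ then go through unchanged.
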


{\color{black}
\subsection{Quantum/classical Hybrid Algorithms for Lindblad Simulation}

\begin{figure}[htp]
    \centering
    \includegraphics[scale=0.34]{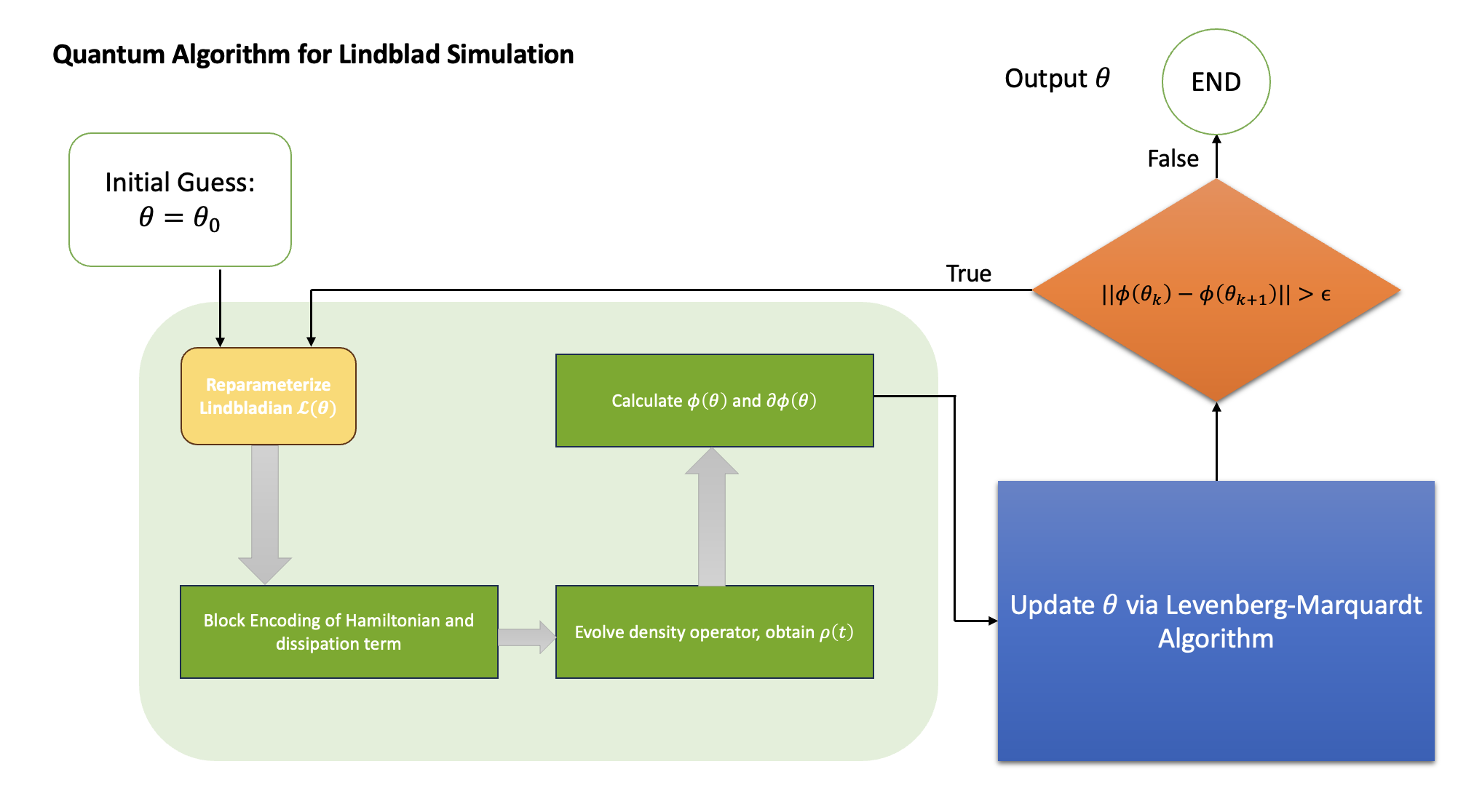}
    \caption{\color{black} A flowchart describing the parameter estimation algorithm using a quantum algorithm for Lindblad simulation. The simulation scheme is implemented by three steps: 1) Block encoding the Hamiltonian term and jump operators in the Lindbladian with coefficients depending on the input parameter $\bm \theta$; 2) Evolve the density operator $\rho(t)$ to time $t$; 3) Measure  observables $A^{(k)}$  with the density operator $\rho(t)$ and time $t_n$, and calculate the objective function value $\phi(\theta)$ and its derivatives $\partial\phi(\theta)$. 4) The optimization part is based on Levenberg-Marquardt algorithm on a classical device.}
    \label{fig:flowchart2}
\end{figure}

We have thus far discussed a classical approach to simulating the Lindblad equation, which is used to evaluate the objective function $\phi(\bm \theta)$ and its gradients as shown in \cref{deriv}. This method is suitable for quantum systems of moderate size, such as those that can be efficiently  treated using large-scale parallel algorithms. Alternatively, these dynamics can be simulated directly on quantum computers. Several such algorithms have already been developed \cite{KBG11,CL17,CW17,LW22}. Due to the many technical elements involved in these algorithms, we will sketch the overall procedure and leave the detailed presentation of these methods in separate works.  As illustrated in \cref{fig:flowchart2}, the overall procedure can be regarded as a quantum/classical hybrid algorithm, where some approximation of the parameter $\bm \theta$ is fed into the Lindbladian, from which the objective function   $\phi(\bm \theta)$ is estimated as an expectation. In addition, a gradient estimation algorithm, such as the improved Jordan's algorithm  \cite{gilyen2019optimizing}, can be used to estimate the gradient of  $\phi(\bm \theta)$. On the other hand, we run the LM algorithm on a classical device to provide an update of the parameter, which re-enters the quantum Lindblad simulation until it reaches convergence.

}

\section{Numerical Tests}\label{sec: num}
\red{In this section, we present several numerical results to test the effectiveness of our learning algorithm. For the test problem, we consider a quantum system of qubits with dynamics described by Lindblad \cref{eq: lindblad}. In particular, we assume that $H$ is linear combination of $k-$local operators in that each term is acting on at most $k$ qubits. In addition, the jump operators $V_j$ are assumed to be $1-$qubit Paulis.
Specifically, we choose the number of qubits $n=6$.  The initial states are fixed as the product state with all spins up. Meanwhile, the observables are chosen to be 1 and 2-local Pauli Strings. The performance of our learning algorithm will be quantified with relative and absolute error with respect to 2-norm. The source code is available at \cite{LearningLindblad}. }


\subsection{Phase Damping Model with Linear Parameter Dependence}
We \bl{investigate} an \bl{$n$}-spin system with dephasing and amplitude damping noise on every qubit \cite{temme2017error}. The dynamics are described by the QME in \cref{eq: lindblad}, with the parametric form of the Hamiltonian part given by, 
\begin{equation}\label{par:1H}
    H = \sum_{j=1}^n H_j, \quad  H_j = \sum_{\alpha = 1}^3 e_{j,\alpha}\sigma_j^{\alpha}+\sum_{\alpha,\beta = 1}^3 c_{j,\alpha, \beta}\sigma_j^{\alpha}\sigma_{j+1}^{\beta},\quad j = 1,\cdots,n.
\end{equation}
 $\sigma_j^{\alpha}$ is the Pauli matrix  $\sigma^{\alpha}$  ($\alpha=x,y,z$) applied to the $j$th qubit, and \bl{the condition} $c_{n,\alpha,\beta} = 0$ \bl{models} open boundaries.

Meanwhile, the dissipation term  is parameterized as follows,
\begin{equation}\label{par:1D}
    \mathcal{L}_D\rho = \lambda_1\sum_{j = 1}^n\mathcal{T}[\sigma_j^-] \rho+\lambda_2\sum_{j=1}^n\mathcal{T}[\sigma_j^z]\rho,
\end{equation}
where the operator $\mathcal{T}[V]$ is defined as
\begin{equation}
    \mathcal{T}[V] \rho = V\rho V^{\dag}-\frac{1}{2}\{V^{\dag}V,\rho\}.
\end{equation}
\bl{The operators $\sigma_j^{\pm} := 2^{-1}(\sigma_j^x\pm i \sigma_j^y)$ are introduced.} 

\bl{The parameters are expressed as a vector $\bm \theta$, consisting of Hamiltonian coefficients $\bm \theta_H = \{e_{j,\alpha},c_{j,\alpha,\beta}\}$ and dissipative parameters $\bm \theta_D = (\lambda_1, \lambda_2)$, with 65 unknown parameters in total. To initialize the learning algorithms, }these parameters will be generated from a Gaussian distribution, and then held fixed as the exact values of the parameters. 
\medskip

We first test the accuracy of the simulation methods in \cref{sec: simulation}. 
As a reference, we conducted direct simulations of the test model in \cref{par:1H,par:1D} by using very small step size: $\delta t = 10^{-4}$. \bl{The measurement with 1-qubit local observable $A:= \sigma_2^y$, will serve as the benchmark solution and is denoted by $y_\text{exact}(t)$. }
\cref{fig:exp19Simu}  \bl{displays} the \bl{evolution} of this observable in the interval $t\in [0,10].$ \bl{It contrasts the "exact" solution with results from} the semi-implicit Euler method \eqref{semi-Euler} and the second-order implicit method \eqref{2ndorder} using step size $\delta t= 10^{-2}$,  \bl{referred to as} $y_1(t),$ and $y_2(t)$ respectively. One can observe that the semi-implicit Euler's method yields very good accuracy in the initial period $[0,0.5]$, but only a qualitatively correct solution in the transient state $[0.5,4]$. In addition, it tends to smear out the solution in the long run, e.g., in the final period $[6,10].$ But the second-order implicit method offers much better accuracy in the entire time duration.
Due to the high accuracy, the measurement data in  the following numerical experiments
will be generated using the second-order implicit method with $\delta t=10^{-2}$.

\begin{figure}[thp]
    \centering
    \includegraphics[scale=0.11]{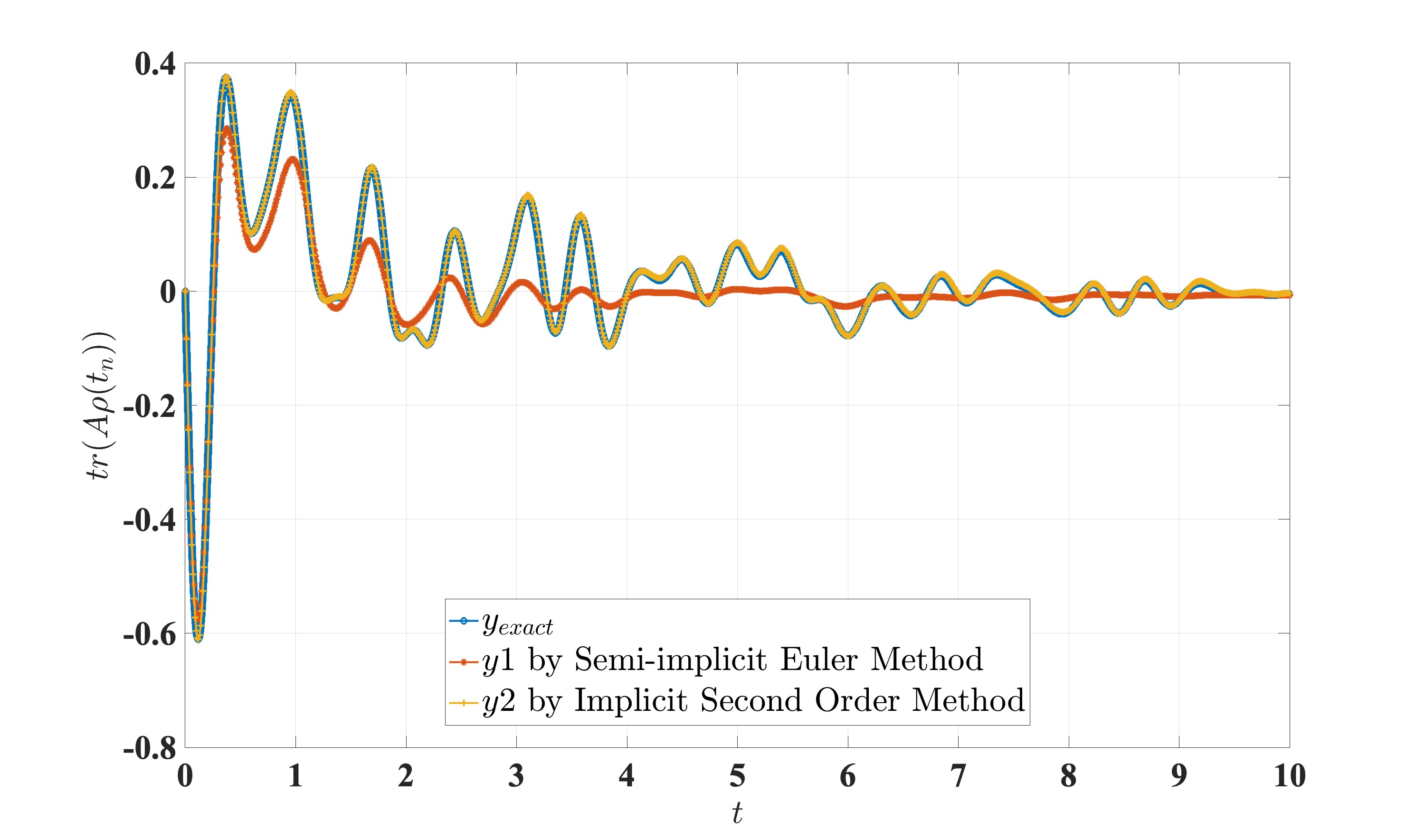}
    \caption{Comparison of the accuracy for the simulation methods in \cref{sec: simulation}, \bl{applied to a} $6$\bl{-spin} \bl{chain} with dephasing and amplitude damping noise. The density operator $\rho$ is measured with 1-qubit local observables $A := \sigma_2^y$ and the measurement \bl{at each time point denoted by} $y_n = \tr(A\rho(t_n))$.
    The blue solid line represents the "exact" solution generated  with a very small step size.}
    \label{fig:exp19Simu}
\end{figure}

\medskip

We \bl{ now employ the Lindbladian simulation scheme to implement the optimization algorithms detailed in \cref{sec: optm}}.
\bl{Observations were made at time points} $t_n=0.1,0.2,\cdots, 1$ with measurement interval $\Delta t=0.1$ \bl{and in the learning algorithm, the underlying Lindblad dynamics were simulated with a smaller step $\delta t = 0.01$ ($L=10$)}. \bl{The performance of the  algorithms was tested on the basis of all 1-local observables or all 1 and 2-local observables, generated by Pauli matrices. Numerical results in \cref{fig:exp30.3&4} include the objective function value and the relative error $\norm{\bm \theta -\bm \theta^*}/\norm{\bm \theta^*}$. Notably, the Levenberg-Marquardt algorithm demonstrated rapid convergence. Furthermore, with all 1 and 2-local observable, the optimization yields slightly faster parameter identification. It demonstrates that an increased number of observables enhances identification.}

\begin{figure}[htp]
    \centering
    \includegraphics[scale=0.11]{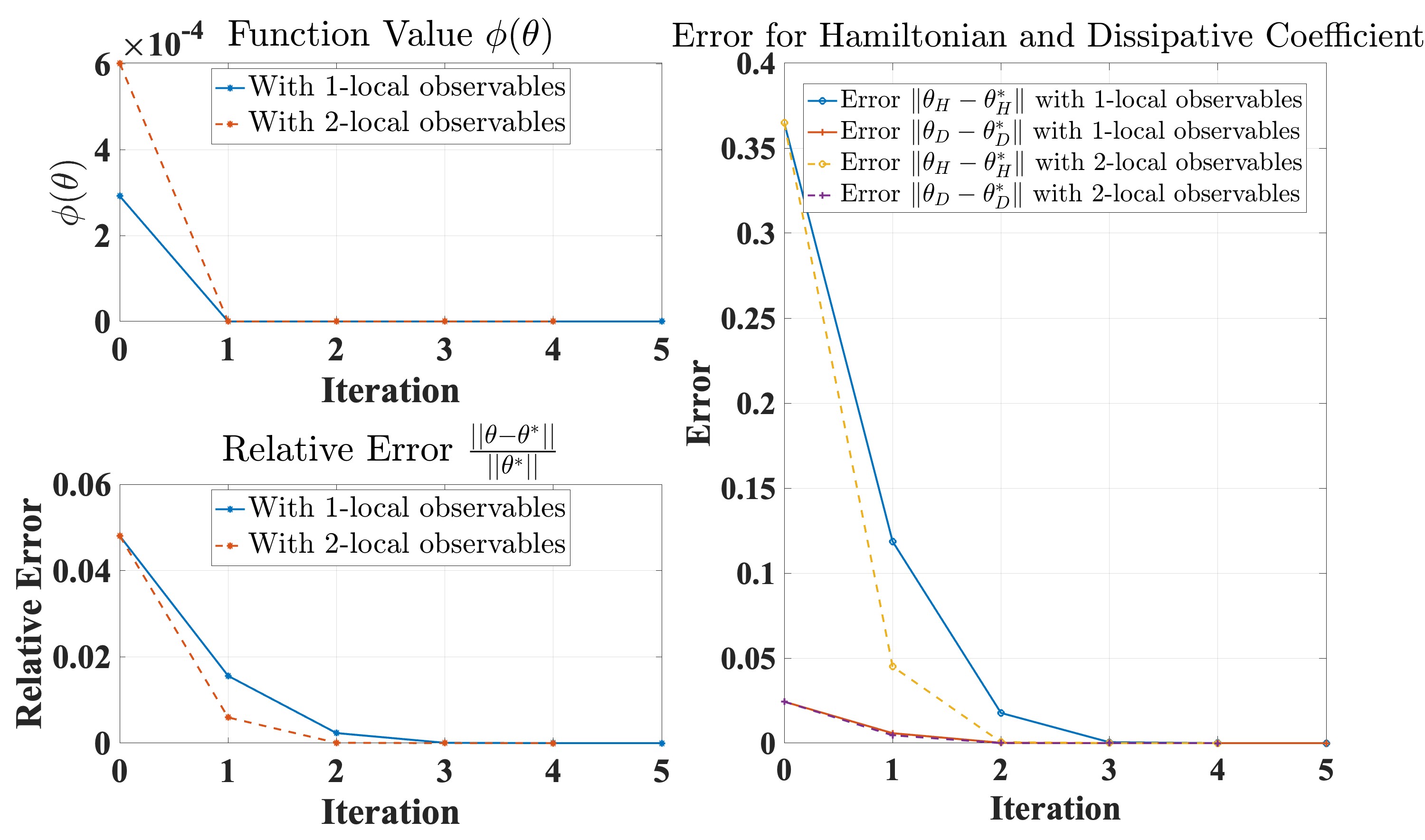}
    \caption{Reconstruction of the Lindbladians in \cref{par:1H,par:1D} by  solving the optimization problem \eqref{eq: phi}. \bl{The dataset consists of 1-qubit local observables ($N_O=19$) and both 1 and 2-local observables($N_O=154$) at time $t_n=0.1,0.2,\cdots, 1$. The numerical simulation time interval is smaller as $\delta t=0.01$.} The initial parameter  $\bm \theta^{(0)}$ is randomly selected with initial error $\| \bm\theta^{(0)}- \bm \theta^*\| = 0.3658$.
}
    \label{fig:exp30.3&4}
\end{figure}

\bl{Extending our analysis, we retained 1-local observables but increased measurement frequency, setting measurement times to $\tau = t =0.01,0.02,\cdots,1$ ($\Delta t=\delta t=0.01$). \cref{fig:exp28} indicates that higher measurement frequency  slightly enhances parameter identification.} The efficiency of the optimization can again be attributed to the rapid convergence of the LM algorithm. 
\begin{figure}
    \centering
    \includegraphics[scale=0.11]{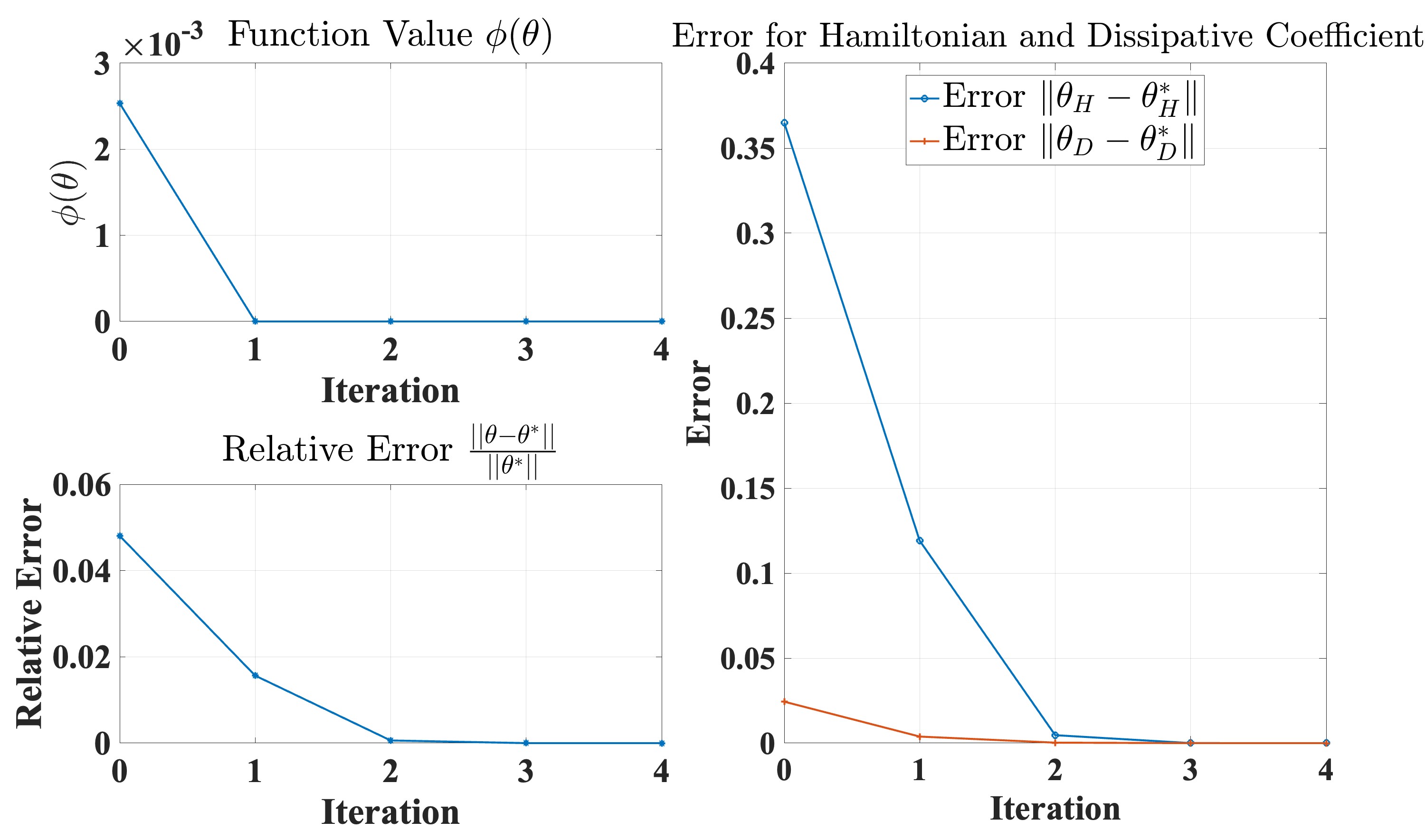}
    \caption{Reconstruction of the Lindbladians in \cref{par:1H,par:1D} by  solving the optimization problem \eqref{eq: phi}. \bl{The dataset consists of  1-qubit local observables ($N_O=19$) at time $t = 0.01,0.02,\cdots,1$($N_T = 100$). The numerical simulation time interval is the same as $\delta t=0.01$.}
    The initial parameter  $\bm\theta^{(0)}$ is the same as \bl{\cref{fig:exp30.3&4}.}
}
    \label{fig:exp28}
\end{figure}

\bl{Inspired} by the study in \cite{bairey2020learning}, where the Lindbladians are learned \bl{from} steady \bl{states}, \bl{our last experiment uses} observation times at a later stage $t = 4.1,4.2,\cdots,5$. A separate numerical test verified that this is when the system is beginning to saturate (see \cref{{fig:exp19Simu}} as well). \bl{As shown in} \cref{fig:exp30.6.1}, the optimization \bl{identified} all the parameters. \bl{Despite the non-monotonic convergence of $\bm \theta_H$, the last few iterations exhibited rapid convergence.}

\begin{figure}[htp]
    \centering
    {\includegraphics[scale=0.11]{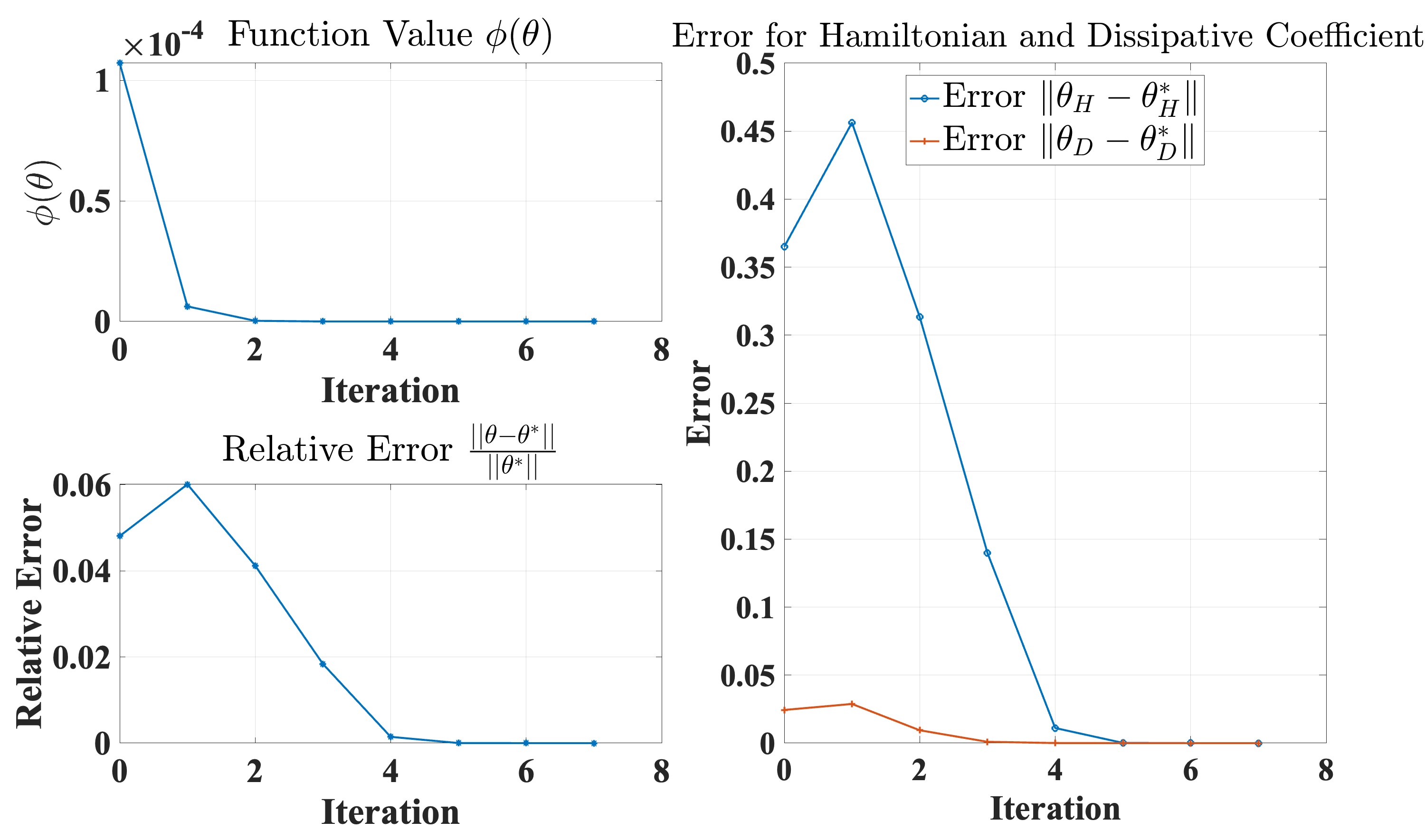}} 
    \caption{ Reconstruction of the Lindbladians in \cref{par:1H,par:1D} by  solving the optimization problem \eqref{eq: phi}. \bl{The dataset consists of  1-qubit local observable s($N_O=19$) at time $t = 4.1,4.2,\cdots,5$, ($N_T = 10$). The numerical simulation time interval is smaller as $\delta t=0.01$.}
    The initial parameter $\bm\theta^{(0)}$ is the same as \bl{\cref{fig:exp30.3&4}.}\label{fig:exp30.6.1}
    }
\end{figure}

\subsection{A Nonlinear Parametric Model}\label{sec: nonlinear}
\bl{
The model  in the previous section involves dissipation terms that are linear with respect to the parameters. In this section, we assume that each jump operator in the dissipation term has a linear parametric form, effectively leading to a nonlinear parametric model. 
This coincides with the example in Bairey et al. \cite{bairey2020learning}. Specifically, the Hamiltonian is of the same linear form as in the previous section while} the jump operators are expanded as a linear combination of local Pauli matrices \bl{with complex coefficients},
\begin{equation}
\label{par:2D}
 V_j = \sum_{\alpha=1}^3 d_{j,\alpha}\sigma_j^{\alpha} = \sum_{\alpha = 1}^3 d_{j,\alpha}^{(1)}\sigma_j^{\alpha}+id_{j,\alpha}^{(2)} \sigma_j^\alpha,\quad j = 1,\cdots, N_V.
\end{equation}
\bl{Similarly, the parameters are randomly generated by the Gaussian distribution},  
\begin{equation}\label{par2d}
    e_{j,\alpha},c_{j,\alpha,\beta}\sim\mathcal{N}(0,1),\quad\text{Re}(d_{j,\alpha}), \text{Im}(d_{j,\alpha})\sim\mathcal{N}(0,\frac{1}{2}).
\end{equation}
These parameters will then be fixed and considered to be exact. Subsequently, they are used to generate the observation data $y_{k,n}^*$ in \eqref{ykn}.  We have noticed that the parametric form in \cref{par2d} is not unique due to the simultaneous appearance of $V_j$ and $V_j^\dagger$ in the Lindbladian. 
To eliminate the redundancy from the global phase of $d_{j}$ for each $j$, we set \bl{the imaginary part of $d_{j,1}$ to zeros}.
The parameter $\bm \theta $ is composed by Hamiltonian part $\bm \theta_H = \{e_{j,\alpha}, c_{j,\alpha,\beta}\}$ as in \cref{par:1H} \bl{and the dissipation part} $\bm \theta_D = \{d_{j,\alpha}^{(1)},d_{j,\alpha}^{(2)}\}$. 

The optimization problem \eqref{eq: phi} with values $y_{k,n}^*$ \bl{involved both} 
 1-local and 2-local observables at time $t = 0.1,0.2,\cdots,1$ \bl{($\Delta t=0.1$)}. \bl{The simulation time interval is much smaller: $\delta t = 0.01$. These experiments (results shown in \cref{fig:exp1923}) indicate that effective parameter identification could be achieved even with the nonlinear parametric model. Moreover, faster convergence was observed when employing all 1 and 2-local observables, which could be possibly caused by a larger constant in the condition \eqref{ident}.}
To further illustrate the effect of the choice of the observables, we choose 1-local observables, by only keeping $\sigma_j^x$ and $\sigma_j^y$'s for each spin, while other configurations of the test remain the same. \bl{This modification led to a noticeable increase in the number of iterations to reach a plateau, as demonstrated in \cref{fig:exp23.2}. The large optimization error suggests that the limited set of 1-local observables is insufficient for the Lindbladian learning problem.}

\begin{figure}[hbtp]
    \begin{center}
        \includegraphics[scale=0.11]{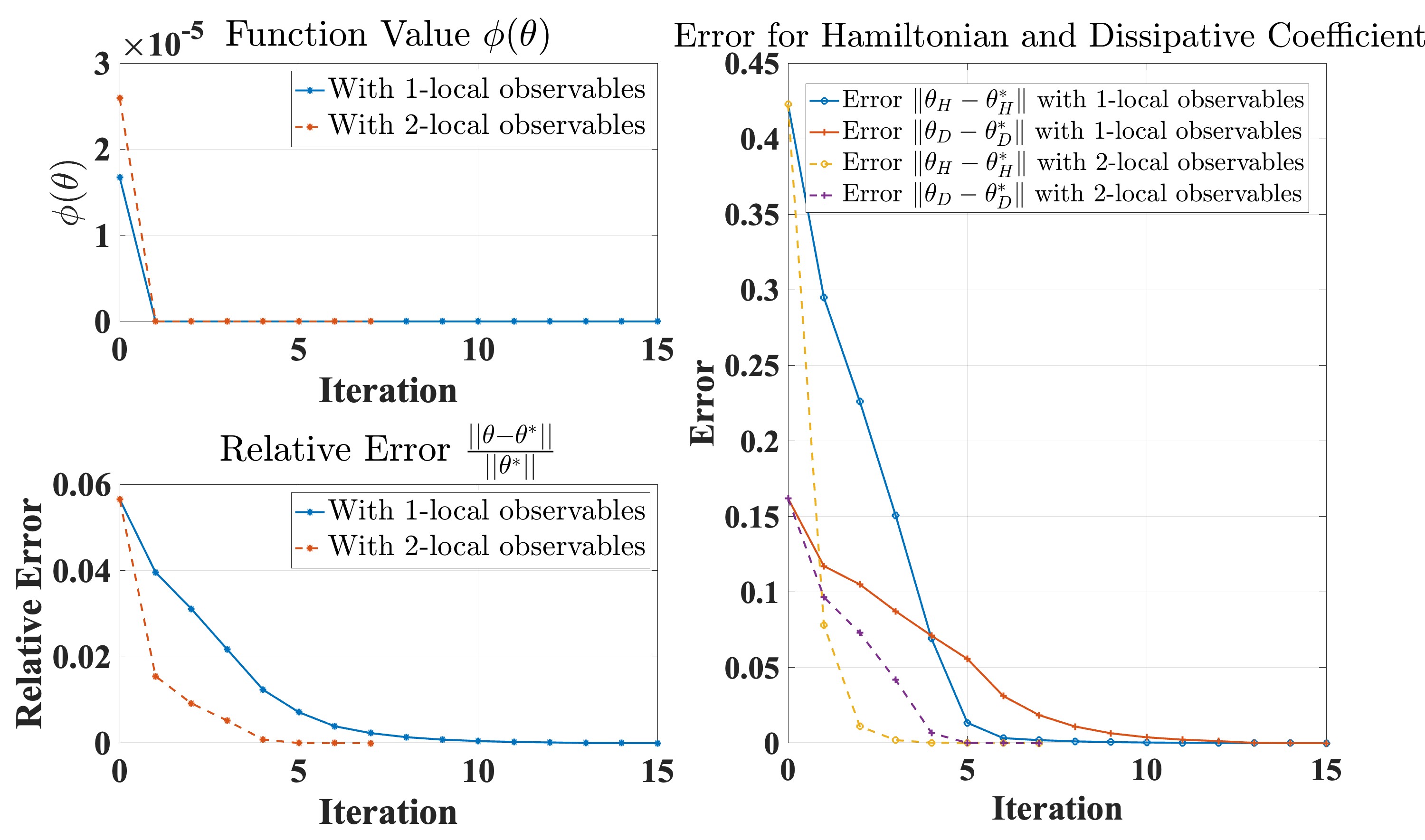}
    \end{center}
    \caption{Reconstruction of the Lindbladians in \cref{par:1H,par:2D} by  solving the optimization problem \eqref{eq: phi}. \bl{The dataset consists of  1-qubit local observables ($N_O=19$) and both 1 and 2-local observables($N_O=154$) at time $t_n=0.1,0.2,\cdots, 1$. The numerical simulation time interval is smaller as $\delta t=0.01$.}
    The initial parameter  $\bm \theta^{(0)}$ is randomly selected with initial error $\|\bm \theta^{(0)}- \bm \theta^*\| = 0.4529$. 
    }
    \label{fig:exp1923}
\end{figure}

\begin{figure}[hbtp]
    \centering
    \includegraphics[scale=0.11]{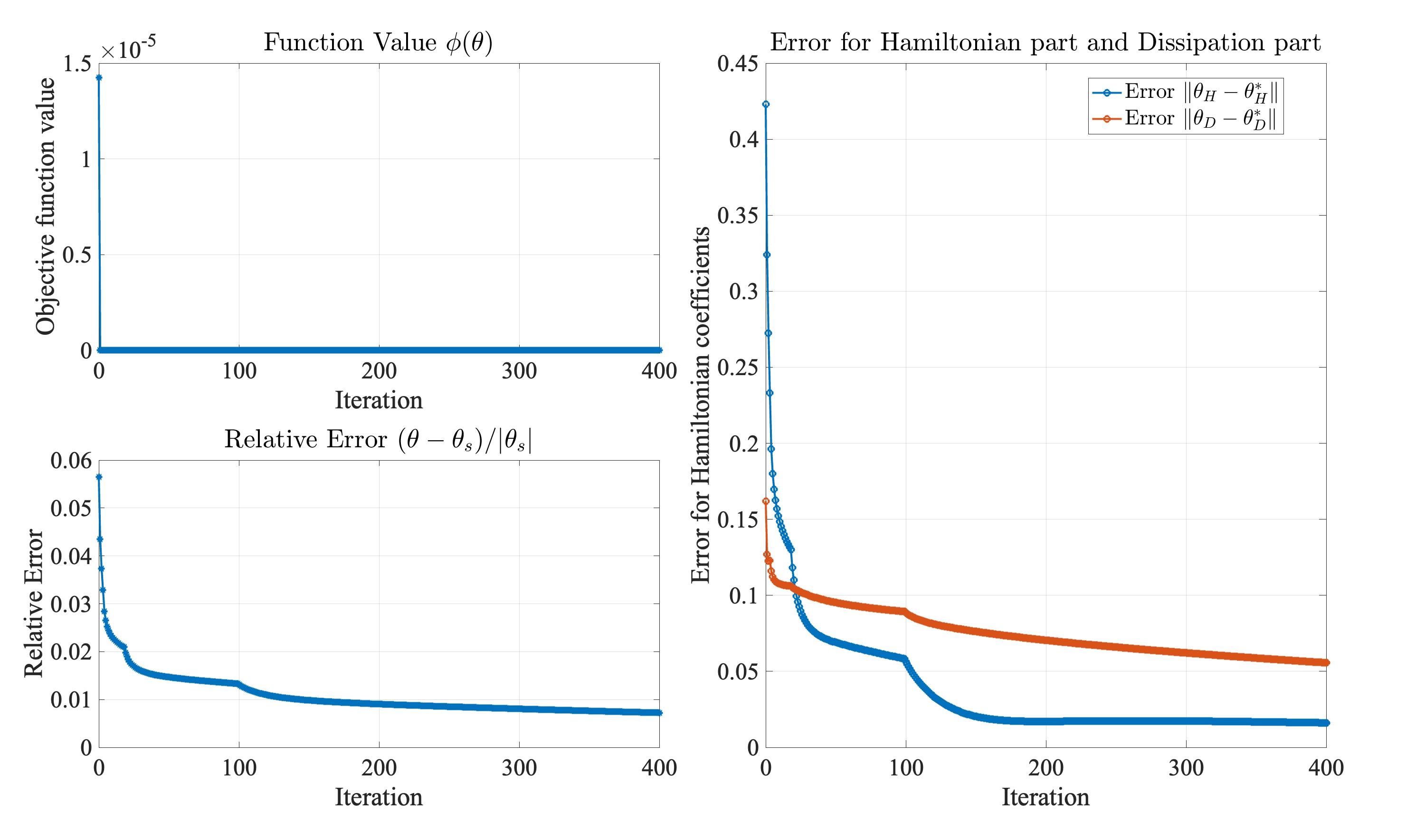}
    \caption{ Reconstruction of the Lindbladians in \cref{par:1H,par:2D} by  solving the optimization problem \eqref{eq: phi}. \bl{The dataset consists of only 1-qubit local observables $\sigma^x_j,\sigma^y_j$ ($N_O=12$) at time $t_n=0.1,0.2,\cdots, 1$ ($N_T = 10$). The numerical simulation time interval is smaller as $\delta t=0.01$.}
    The initial parameter   $\bm\theta^{(0)}$ is the same as the previous test. 
    }
    \label{fig:exp23.2}
\end{figure}

\bl{ So far we only considered a small initial guess to show the second-order convergence property of Levenberg-Marquardt algorithm. To illustrate the robustness of our algorithm, 
we implemented three additional numerical experiments with initial guesses chosen such that the difference between the initial parameter $\bm \theta^{(0)}$ and the minimizer $\bm\theta^*$ is $\mathcal{O}(1)$ for both linear and nonlinear cases. We repeat the numerical experiment in \cref{fig:exp28}, but using a random initial guess with a much larger error $\|\bm\theta^{(0)}-\bm\theta^*\| = 2.3650$. The convergence is within 10 iterations as shown in \cref{fig:exp28.1 further3}. 
\begin{figure}
    \centering
    \includegraphics[scale=0.11]{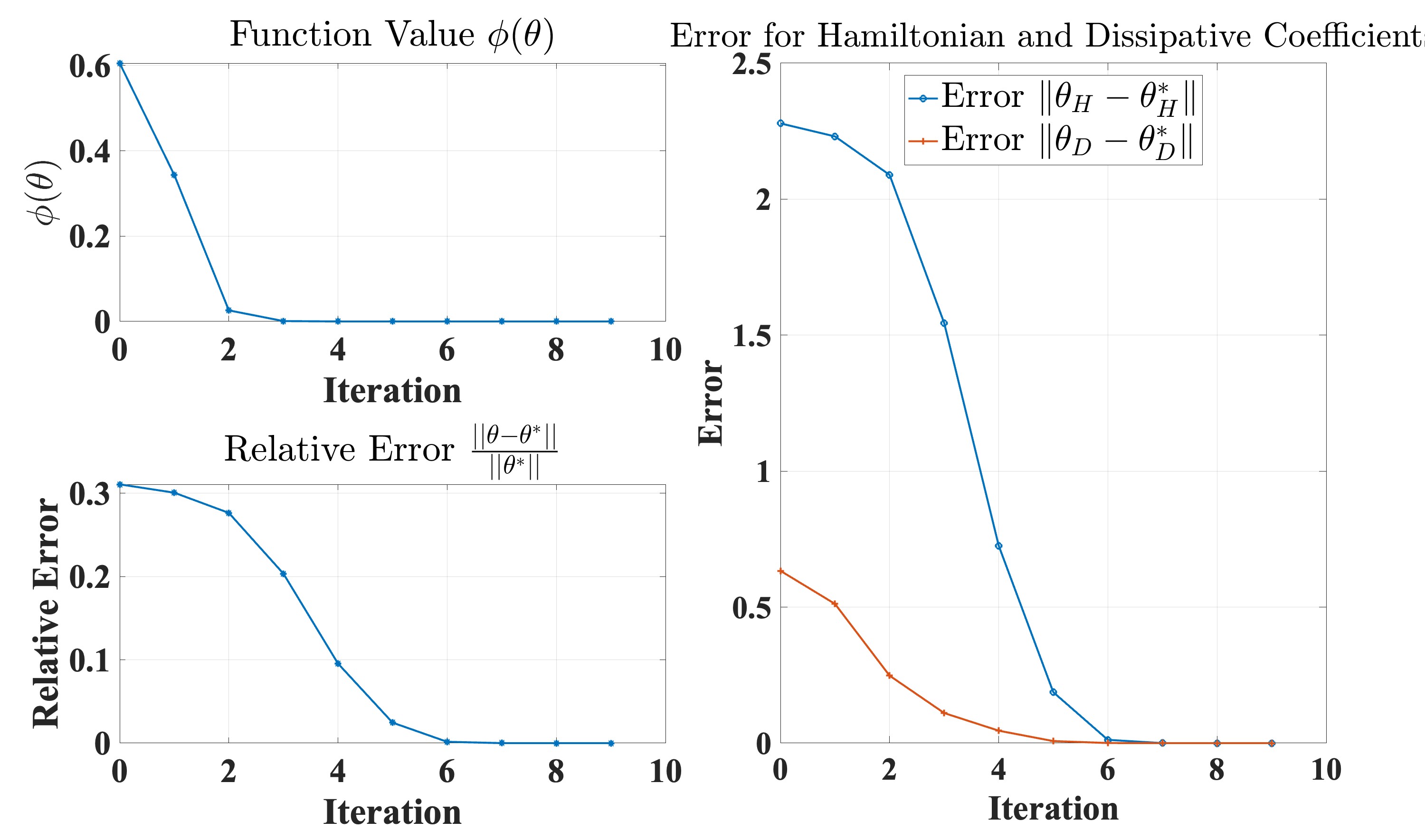}
    \caption{\bl{Reconstruction of the Lindbladians of the linear parameter case. The dataset includes  1-qubit local observables ($N_O=19$) at time $t = 0.01,0.02,\cdots,1$ ($N_T = 100$). The numerical simulation time interval is the same as $\delta t=0.01$. Same setting as \cref{fig:exp28}. 
    The initial guess is larger,  $\|\bm\theta^{(0)}-\bm\theta^*\| = 2.3650$.}}
    \label{fig:exp28.1 further3}
\end{figure}
Similarly, we repeat the experiment in \cref{fig:exp1923} and the initial guess has error $\|\bm\theta^{(0)}-\bm\theta^*\| = 2.0359$. Our algorithm still converged quite rapidly as depicted in \cref{fig:exp25.3 further3}.
\begin{figure}
    \centering
    \includegraphics[scale=0.11]{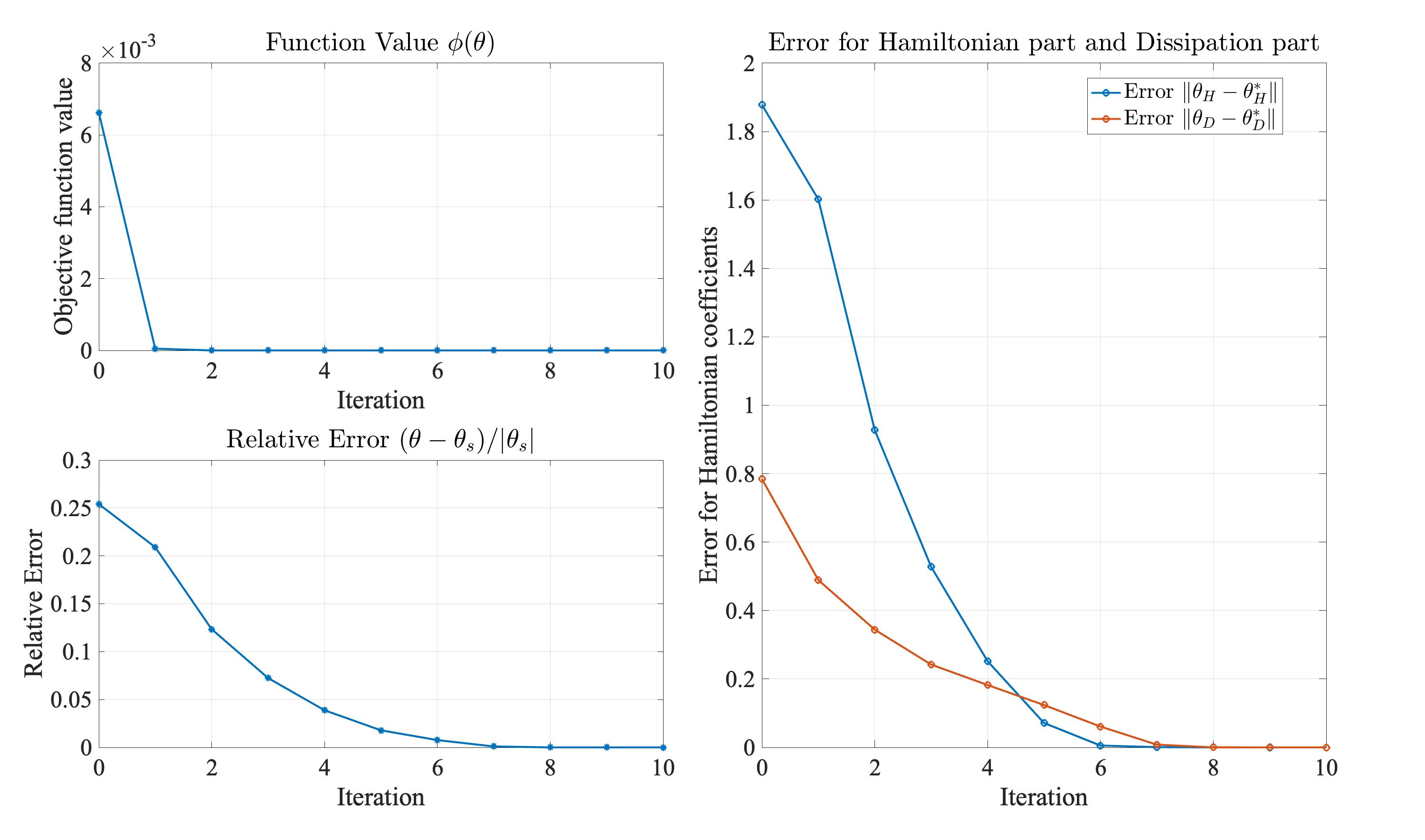}
    \caption{\bl{Reconstruction of the Lindbladians of the nonlinear parameter case. The dataset consists of  1-qubit local observables ($N_O=19$) at time $t = 0.1,0.2,\cdots,1$ ($N_T = 10$). The numerical simulation time interval is the same as $\delta t=0.01$. Same setting as \cref{fig:exp1923}. 
    The initial guess is larger,  $\|\bm\theta^{(0)}-\bm\theta^*\| = 2.0359$.}}
    \label{fig:exp25.3 further3}
\end{figure}
We now choose $\bm \theta^{(0)}$ with a much larger initial error $\|\bm \theta^{(0)}-\bm \theta^*\| = 7.8569$ in the  experiment in \cref{fig:exp28}. As we can observe from \cref{fig:exp28.1 further2}, the algorithm still exhibits convergence but the parameter $\bm\theta$ has settled to another minimum. Nevertheless, the objective function is almost zero, indicating an excellent fit to the observation data. 
In summary, our algorithm demonstrates robust performance even with a considerably larger initial guess.
However, the algorithm might end up with another global minimum $\bm\theta$.
\begin{figure}
    \centering
    \includegraphics[scale=0.11]{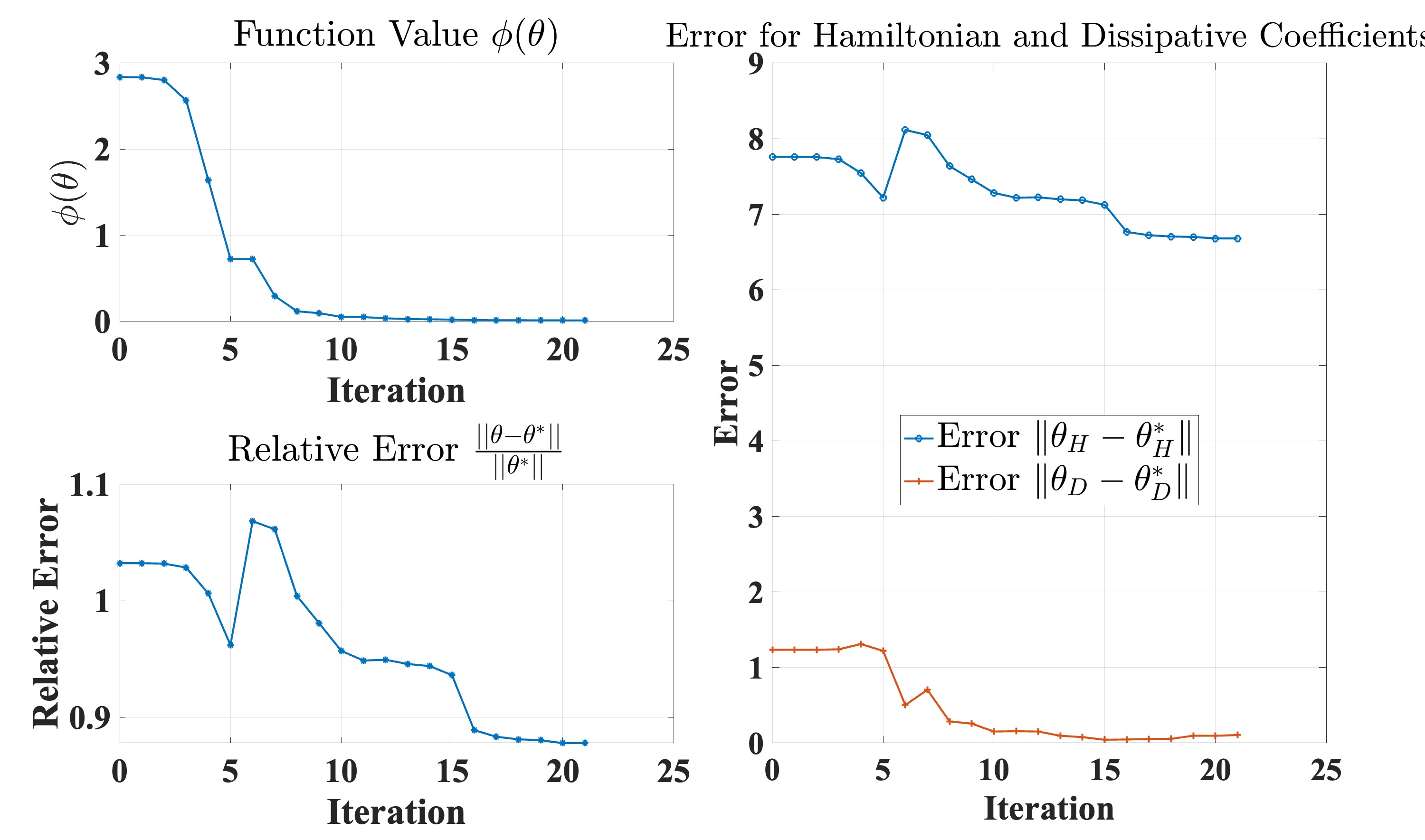}
    \caption{\bl{Reconstruction of the Lindbladians of the linear parameter case. The dataset consists of 1-qubit local observables ($N_O=19$) at time $t = 0.01,0.02,\cdots,1$ ($N_T = 100$). The numerical simulation time interval is the same as $\delta t=0.01$. Same setting as \cref{fig:exp28}. 
    The initial guess is larger,  $\|\bm\theta^{(0)}-\bm\theta^*\| = 7.8569$.}}
    \label{fig:exp28.1 further2}
\end{figure}}

\section{Summary and discussions} 
In this paper, we presented an algorithm to infer parameters in an open quantum system, specifically, in a Lindblad equation. Rather than working with the observation times, we introduce smaller time scales where the solution is obtained by direct numerical simulations. There are two advantages. On one hand, since the objective function is formulated based on trajectories, we no longer need the observables that correspond to the Lindbladian terms that appeared in the approach in \cite{bairey2020learning,franca2024efficient}. As a result, the number of observables can be much less than the number of jump operators. On the other hand, the algorithm enables flexible control of the accuracy by using smaller step sizes in the numerical simulation.

\section*{Acknowledgements}
We thank Dr. Di Fang for providing some important references.  This work is supported by NSF Grant DMS-1819011 and DMS-2111221. 

\bibliographystyle{quantum}


\begin{thebibliography}{10}

\bibitem{da2011practical}
Marcus~P da~Silva, Olivier Landon-Cardinal, and David Poulin.
\newblock ``Practical characterization of quantum devices without tomography''.
\newblock \href{https://dx.doi.org/10.1103/PhysRevLett.107.210404}{Physical
  Review Letters {\bf 107}, 210404}~(2011).

\bibitem{wang2017experimental}
Jianwei Wang, Stefano Paesani, Raffaele Santagati, Sebastian Knauer, Antonio~A
  Gentile, Nathan Wiebe, Maurangelo Petruzzella, Jeremy~L O’brien, John~G
  Rarity, Anthony Laing, et~al.
\newblock ``Experimental quantum {H}amiltonian learning''.
\newblock \href{https://dx.doi.org/10.1038/nphys4074}{Nature Physics {\bf 13},
  551--555}~(2017).

\bibitem{huang2022foundations}
Hsin-Yuan Huang, Steven~T Flammia, and John Preskill.
\newblock ``Foundations for learning from noisy quantum experiments''~(2022).

\bibitem{granade2012robust}
Christopher~E Granade, Christopher Ferrie, Nathan Wiebe, and David~G Cory.
\newblock ``Robust online {Hamiltonian} learning''.
\newblock \href{https://dx.doi.org/10.1088/1367-2630/14/10/103013}{New Journal
  of Physics {\bf 14}, 103013}~(2012).

\bibitem{rudinger2015compressed}
Kenneth Rudinger and Robert Joynt.
\newblock ``Compressed sensing for {H}amiltonian reconstruction''.
\newblock \href{https://dx.doi.org/10.1103/PhysRevA.92.052322}{Physical Review
  A {\bf 92}, 052322}~(2015).

\bibitem{wiebe2015quantum}
Nathan Wiebe, Christopher Granade, and David~G Cory.
\newblock ``Quantum bootstrapping via compressed quantum {H}amiltonian
  learning''.
\newblock \href{https://dx.doi.org/10.1088/1367-2630/17/2/022005}{New Journal
  of Physics {\bf 17}, 022005}~(2015).

\bibitem{burgarth2017evolution}
Daniel Burgarth and Ashok Ajoy.
\newblock ``Evolution-free {H}amiltonian parameter estimation through {Zeeman}
  markers''.
\newblock \href{https://dx.doi.org/10.1103/PhysRevLett.119.030402}{Physical
  Review Letters {\bf 119}, 030402}~(2017).

\bibitem{sone2017hamiltonian}
Akira Sone and Paola Cappellaro.
\newblock ``Hamiltonian identifiability assisted by a single-probe
  measurement''.
\newblock \href{https://dx.doi.org/10.1103/PhysRevA.95.022335}{Physical Review
  A {\bf 95}, 022335}~(2017).

\bibitem{wang2017quantum}
Yuanlong Wang, Daoyi Dong, Bo~Qi, Jun Zhang, Ian~R Petersen, and Hidehiro
  Yonezawa.
\newblock ``A quantum {H}amiltonian identification algorithm: Computational
  complexity and error analysis''.
\newblock \href{https://dx.doi.org/10.1109/TAC.2017.2747507}{IEEE Transactions
  on Automatic Control {\bf 63}, 1388--1403}~(2017).

\bibitem{zubida2022optimal}
Assaf Zubida, Elad Yitzhaki, Netanel Lindner, and Eyal Bairey.
\newblock ``Optimal short-time measurements for {H}amiltonian learning''.
\newblock \href{https://dx.doi.org/10.48550/arXiv.2108.08824}{Bulletin of the
  American Physical Society{\bf 67}}~(2022).

\bibitem{huang2023learning}
Hsin-Yuan Huang, Yu~Tong, Di~Fang, and Yuan Su.
\newblock ``Learning many-body {H}amiltonians with {H}eisenberg-limited
  scaling''.
\newblock \href{https://dx.doi.org/10.1103/PhysRevLett.130.200403}{Physical
  Review Letters {\bf 130}, 200403}~(2023).

\bibitem{haah2022optimal}
Jeongwan Haah, Robin Kothari, and Ewin Tang.
\newblock ``Optimal learning of quantum hamiltonians from high-temperature
  gibbs states''.
\newblock In 2022 IEEE 63rd Annual Symposium on Foundations of Computer Science
  (FOCS).
\newblock \href{https://dx.doi.org/10.1109/FOCS54457.2022.00020}{Pages
  135--146}.
\newblock IEEE~(2022).

\bibitem{haah2016sample}
Jeongwan Haah, Aram~W Harrow, Zhengfeng Ji, Xiaodi Wu, and Nengkun Yu.
\newblock ``Sample-optimal tomography of quantum states''.
\newblock In Proceedings of the forty-eighth annual ACM symposium on Theory of
  Computing.
\newblock \href{https://dx.doi.org/10.1109/TIT.2017.2719044}{Pages 913--925}.
\newblock ~(2016).

\bibitem{bakshi2023learning}
Ainesh Bakshi, Allen Liu, Ankur Moitra, and Ewin Tang.
\newblock ``Learning quantum {H}amiltonians at any temperature in polynomial
  time''~(2023).

\bibitem{ni2024quantum}
Hongkang Ni, Haoya Li, and Lexing Ying.
\newblock ``Quantum hamiltonian learning for the fermi-hubbard model''.
\newblock \href{https://dx.doi.org/10.1007/s10440-024-00651-4}{Acta Applicandae
  Mathematicae {\bf 191}, 1--16}~(2024).

\bibitem{li2023heisenberg}
Haoya Li, Yu~Tong, Hongkang Ni, Tuvia Gefen, and Lexing Ying.
\newblock ``Heisenberg-limited hamiltonian learning for interacting
  bosons''~(2023).

\bibitem{holzapfel2015scalable}
M~Holz{\"a}pfel, T~Baumgratz, M~Cramer, and Martin~B Plenio.
\newblock ``Scalable reconstruction of unitary processes and hamiltonians''.
\newblock \href{https://dx.doi.org/10.1103/PhysRevA.91.042129}{Physical Review
  A {\bf 91}, 042129}~(2015).

\bibitem{bairey2020learning}
Eyal Bairey, Chu Guo, Dario Poletti, Netanel~H Lindner, and Itai Arad.
\newblock ``Learning the dynamics of open quantum systems from their steady
  states''.
\newblock \href{https://dx.doi.org/10.1088/1367-2630/ab73cd}{New Journal of
  Physics {\bf 22}, 032001}~(2020).

\bibitem{franca2024efficient}
Daniel~Stilck Franca, Liubov~A Markovich, VV~Dobrovitski, Albert~H Werner, and
  Johannes Borregaard.
\newblock ``Efficient and robust estimation of many-qubit hamiltonians''.
\newblock \href{https://dx.doi.org/10.1038/s41467-023-44012-5}{Nature
  Communications{\bf 15}}~(2024).

\bibitem{samach2022lindblad}
Gabriel~O Samach, Ami Greene, Johannes Borregaard, Matthias Christandl, Joseph
  Barreto, David~K Kim, Christopher~M McNally, Alexander Melville, Bethany~M
  Niedzielski, Youngkyu Sung, et~al.
\newblock ``Lindblad tomography of a superconducting quantum processor''.
\newblock \href{https://dx.doi.org/10.1103/PhysRevApplied.18.064056}{Physical
  Review Applied {\bf 18}, 064056}~(2022).

\bibitem{breuer2002theory}
Heinz-Peter Breuer and Francesco Petruccione.
\newblock ``The theory of open quantum systems''.
\newblock
  \href{https://dx.doi.org/10.1093/acprof:oso/9780199213900.001.0001}{Oxford
  University Press on Demand}. ~(2002).

\bibitem{lindblad1976generators}
Goran Lindblad.
\newblock ``On the generators of quantum dynamical semigroups''.
\newblock \href{https://dx.doi.org/10.1007/BF01608499}{Communications in
  Mathematical Physics {\bf 48}, 119--130}~(1976).

\bibitem{gorini1976completely}
Vittorio Gorini, Andrzej Kossakowski, and Ennackal Chandy~George Sudarshan.
\newblock ``Completely positive dynamical semigroups of {N}-level systems''.
\newblock \href{https://dx.doi.org/10.1063/1.522979}{Journal of Mathematical
  Physics {\bf 17}, 821--825}~(1976).

\bibitem{rouze2021learning}
Cambyse Rouz{\'e} and Daniel~Stilck Fran{\c{c}}a.
\newblock ``Learning quantum many-body systems from a few copies''~(2021).

\bibitem{kloeden2011numerical}
P.E. Kloeden and E.~Platen.
\newblock ``{Numerical Solution} of {Stochastic Differential Equations}''.
\newblock \href{https://dx.doi.org/10.1007/978-3-662-12616-5}{Stochastic
  Modelling and Applied Probability}. Springer Berlin Heidelberg. ~(2011).

\bibitem{yamashita2001rate}
Nobuo Yamashita and Masao Fukushima.
\newblock ``On the rate of convergence of the {Levenberg-Marquardt} method''.
\newblock In Topics in Numerical Analysis: With Special Emphasis on Nonlinear
  Problems.
\newblock \href{https://dx.doi.org/10.1007/978-3-7091-6217-0_18}{Pages
  239--249}.
\newblock Springer~(2001).

\bibitem{fan2005quadratic}
Jin-yan Fan and Ya-xiang Yuan.
\newblock ``On the quadratic convergence of the {Levenberg-Marquardt} method
  without nonsingularity assumption''.
\newblock \href{https://dx.doi.org/10.1007/s00607-004-0083-1}{Computing {\bf
  74}, 23--39}~(2005).

\bibitem{watrous2018theory}
John Watrous.
\newblock ``The theory of quantum information''.
\newblock \href{https://dx.doi.org/10.1017/9781316848142}{Cambridge university
  press}. ~(2018).

\bibitem{kelley1999iterative}
Carl~T Kelley.
\newblock ``Iterative methods for optimization''.
\newblock \href{https://dx.doi.org/10.1137/1.9781611970920}{SIAM}. ~(1999).

\bibitem{brunel2008parameter}
Nicolas~JB Brunel.
\newblock ``Parameter estimation of {ODE’s} via nonparametric estimators''.
\newblock \href{https://dx.doi.org/10.1214/07-EJS132}{Electronic Journal of
  Statistics {\bf 2}, 1242--1267}~(2008).

\bibitem{biele2012stochastic}
Robert Biele and Roberto D’Agosta.
\newblock ``A stochastic approach to open quantum systems''.
\newblock \href{https://dx.doi.org/10.1088/0953-8984/24/27/273201}{Journal of
  Physics: Condensed Matter {\bf 24}, 273201}~(2012).

\bibitem{di2007stochastic}
Massimiliano Di~Ventra and Roberto D’Agosta.
\newblock ``Stochastic time-dependent current-density-functional theory''.
\newblock \href{https://dx.doi.org/10.1103/PhysRevLett.98.226403}{Physical
  Review Letters {\bf 98}, 226403}~(2007).

\bibitem{appel2009stochastic}
Heiko Appel and Massimiliano Di~Ventra.
\newblock ``Stochastic quantum molecular dynamics''.
\newblock \href{https://dx.doi.org/10.1103/PhysRevB.80.212303}{Physical Review
  B {\bf 80}, 212303}~(2009).

\bibitem{temme2017error}
Kristan Temme, Sergey Bravyi, and Jay~M Gambetta.
\newblock ``Error mitigation for short-depth quantum circuits''.
\newblock \href{https://dx.doi.org/10.1103/physrevlett.119.180509}{Physical
  review letters {\bf 119}, 180509}~(2017).

\bibitem{endo2018practical}
Suguru Endo, Simon~C Benjamin, and Ying Li.
\newblock ``Practical quantum error mitigation for near-future applications''.
\newblock \href{https://dx.doi.org/10.1103/PhysRevX.8.031027}{Physical Review X
  {\bf 8}, 031027}~(2018).

\bibitem{cai2023quantum}
Zhenyu Cai, Ryan Babbush, Simon~C Benjamin, Suguru Endo, William~J Huggins,
  Ying Li, Jarrod~R McClean, and Thomas~E O’Brien.
\newblock ``Quantum error mitigation''.
\newblock \href{https://dx.doi.org/10.1103/RevModPhys.95.045005}{Reviews of
  Modern Physics {\bf 95}, 045005}~(2023).

\bibitem{jauberteau2009numerical}
F~Jauberteau and JL~Jauberteau.
\newblock ``Numerical differentiation with noisy signal''.
\newblock \href{https://dx.doi.org/10.1016/j.amc.2009.08.042}{Applied
  Mathematics and Computation {\bf 215}, 2283--2297}~(2009).

\bibitem{ahnert2007numerical}
Karsten Ahnert and Markus Abel.
\newblock ``Numerical differentiation of experimental data: local versus global
  methods''.
\newblock \href{https://dx.doi.org/10.1016/j.cpc.2007.03.009}{Computer Physics
  Communications {\bf 177}, 764--774}~(2007).

\bibitem{boulant2003robust}
Nicolas Boulant, Timothy~F Havel, Marco~A Pravia, and David~G Cory.
\newblock ``Robust method for estimating the lindblad operators of a
  dissipative quantum process from measurements of the density operator at
  multiple time points''.
\newblock \href{https://dx.doi.org/10.1103/PhysRevA.67.042322}{Physical Review
  A {\bf 67}, 042322}~(2003).

\bibitem{av2020direct}
Eitan~Ben Av, Yotam Shapira, Nitzan Akerman, and Roee Ozeri.
\newblock ``Direct reconstruction of the quantum-master-equation dynamics of a
  trapped-ion qubit''.
\newblock \href{https://dx.doi.org/10.1103/PhysRevA.101.062305}{Physical Review
  A {\bf 101}, 062305}~(2020).

\bibitem{strebel2013preprocessing}
Oliver Strebel.
\newblock ``A preprocessing method for parameter estimation in ordinary
  differential equations''.
\newblock \href{https://dx.doi.org/10.1016/j.chaos.2013.08.015}{Chaos, Solitons
  \& Fractals {\bf 57}, 93--104}~(2013).

\bibitem{bhaumik2015bayesian}
Prithwish Bhaumik and Subhashis Ghosal.
\newblock ``Bayesian two-step estimation in differential equation models''.
\newblock \href{https://dx.doi.org/10.1214/15-EJS1099}{Electronic Journal of
  Statistics {\bf 9}, 3124--3154}~(2015).

\bibitem{cao2021structure}
Yu~Cao and Jianfeng Lu.
\newblock ``Structure-preserving numerical schemes for {L}indblad
  equations''~(2021).

\bibitem{sleijpen1993bicgstab}
Gerard~LG Sleijpen and Diederik~R Fokkema.
\newblock ``Bicgstab (ell) for linear equations involving unsymmetric matrices
  with complex spectrum''.
\newblock \href{https://dx.doi.org/10.1007/BF02141261}{Electronic Transactions
  on Numerical Analysis. {\bf 1}, 11--32}~(1993).

\bibitem{nesterov2013introductory}
Yurii Nesterov.
\newblock ``Introductory lectures on convex optimization: A basic course''.
\newblock \href{https://dx.doi.org/10.1007/978-1-4419-8853-9}{Volume~87}.
\newblock Springer Science \& Business Media. ~(2013).

\bibitem{ruskai1994beyond}
Mary~Beth Ruskai.
\newblock ``Beyond strong subadditivity? improved bounds on the contraction of
  generalized relative entropy''.
\newblock \href{https://dx.doi.org/10.1142/9789812798251_0014}{Reviews in
  Mathematical Physics {\bf 6}, 1147--1161}~(1994).

\bibitem{van1974stability}
A~Van~der Sluis.
\newblock ``Stability of the solutions of linear least squares problems''.
\newblock \href{https://dx.doi.org/10.1007/BF01400307}{Numerische Mathematik
  {\bf 23}, 241--254}~(1974).

\bibitem{jin2019short}
Chi Jin, Praneeth Netrapalli, Rong Ge, Sham~M Kakade, and Michael~I Jordan.
\newblock ``A short note on concentration inequalities for random vectors with
  subgaussian norm''~(2019).

\bibitem{KBG11}
Martin Kliesch, Thomas Barthel, Christian Gogolin, Michael~J Kastoryano, and
  Jens Eisert.
\newblock ``Dissipative quantum {C}hurch-{T}uring theorem''.
\newblock \href{https://dx.doi.org/10.1103/PhysRevLett.107.120501}{Physical
  Review Letters{\bf 107}}~(2011).

\bibitem{CL17}
Andrew~M Childs and Tongyang Li.
\newblock ``Efficient simulation of sparse markovian quantum dynamics''.
\newblock \href{https://dx.doi.org/10.26421/QIC17.11-12-1}{Quantum Information
  \& Computation {\bf 17}, 0901--0947}~(2017).

\bibitem{CW17}
Richard Cleve and Chunhao Wang.
\newblock ``Efficient quantum algorithms for simulating {L}indblad evolution''.
\newblock In 44th International Colloquium on Automata, Languages, and
  Programming, (ICALP 2017).
\newblock \href{https://dx.doi.org/10.4230/LIPIcs.ICALP.2017.17}{Pages
  17:1--17:14}.
\newblock ~(2017).

\bibitem{LW22}
Xiantao Li and Chunhao Wang.
\newblock ``{Simulating {Markovian} Open Quantum Systems Using Higher-Order
  Series Expansion}''.
\newblock In Kousha Etessami, Uriel Feige, and Gabriele Puppis, editors, 50th
  International Colloquium on Automata, Languages, and Programming (ICALP
  2023).
\newblock \href{https://dx.doi.org/10.4230/LIPIcs.ICALP.2023.87}{Volume 261 of
  Leibniz International Proceedings in Informatics (LIPIcs), pages
  87:1--87:20}.
\newblock Dagstuhl, Germany~(2023). Schloss Dagstuhl -- Leibniz-Zentrum f{\"u}r
  Informatik.

\bibitem{gilyen2019optimizing}
Andr{\'a}s Gily{\'e}n, Srinivasan Arunachalam, and Nathan Wiebe.
\newblock ``Optimizing quantum optimization algorithms via faster quantum
  gradient computation''.
\newblock In Proceedings of the Thirtieth Annual ACM-SIAM Symposium on Discrete
  Algorithms.
\newblock \href{https://dx.doi.org/10.1137/1.9781611975482.87}{Pages
  1425--1444}.
\newblock SIAM~(2019).

\bibitem{LearningLindblad}
Ke~Wang and Xiantao Li~(2024).
\newblock
  url:~\url{https://github.com/kewang-math/LearningOpenQuantumSystems}.

\end{thebibliography}



\appendix
\section{Appendix}
\subsection{Proof of \cref{theorem: 1}}
\label{appendix: proof thm1}
\begin{proof}
The derivative of the objective function $\phi$ is 
\begin{equation}
    \partial_{\theta_{\alpha}}\phi(\bm\theta)=\frac{1}{N_ON_T} \sum_{k = 1}^{N_O}\sum_{n=1}^{N_T}(y_{k,n}-y_{k,n}^*) \big\langle A^{(k)},\partial_{\theta_{\alpha}}\rho(\tau_{nL};\bm\theta)\big\rangle
\end{equation}
Let $\chi_{nL}^{\alpha} = \partial_{\theta_{\alpha}}\rho(\tau_{nL};\bm\theta)$ and $\rho_{nL} := \rho(\tau_{nL};\theta)$. The iteration formula for the density operator is of the Kraus form $\rho_{m+1} = \mathcal{K}[\rho_{m}] := \sum_j F_j\rho_{m}  F_j^{\dag}, m = 1,\cdots,nL$. It reveals $\chi_{m+1}^{\alpha} = \mathcal{K}[\chi_m^{\alpha}]+\partial_{\theta_{\alpha}}\mathcal{K}[\rho_m]$, then
\begin{equation}
    \begin{aligned}
\big\langle A^{(k)},\chi_{nL}^{\alpha}\big\rangle&=\big\langle A^{(k)},\mathcal{K}\chi_{nL-1}^{\alpha})+\big\langle A^{(k)},\partial_{\theta_{\alpha}}\mathcal{K}\rho_{nL-1}\big\rangle\\
\big\langle A^{(k)},\mathcal{K}\chi_{nL-1}^{\alpha}\big\rangle& = \big\langle A^{(k)},\mathcal{K}^2\chi_{nL-2}^{\alpha}\big\rangle+\big\langle A^{(k)},\mathcal{K}(\partial_{\theta_{\alpha}}\mathcal{K})\rho_{nL-2}\big\rangle\\
\vdots & \\
\Rightarrow \big\langle A^{(k)},\chi_{nL}^{\alpha}\big\rangle&=\big\langle A^{(k)},\mathcal{K}^{nL}\chi_0^{\alpha}\big\rangle+\sum_{l=1}^{nL}\big\langle A^{(k)},\mathcal{K}^{l-1}(\partial_{\theta_{\alpha}}\mathcal{K})\rho_{nL-l}\big\rangle\\
&=\sum_{l=1}^{nL}\big\langle A^{(k)},\mathcal{K}^{l-1}(\partial_{\theta_{\alpha}}\mathcal{K})\rho_{nL-l}\big\rangle\\
    \end{aligned}
\end{equation}
as $\chi_0^{\alpha} = 0$. By taking the adjoint operator,
\begin{equation}
    \begin{aligned}
        \partial_{\theta_{\alpha}}\phi(\bm\theta)
        &= \frac{1}{N_ON_T}\sum_{k,n}(y_{k,n}-y_{k,n}^*)\sum_{l=1}^{nL}\big\langle(\partial_{\theta_{\alpha}}\mathcal{K})^*(\mathcal{K}^*)^{l-1}A^{(k)},\rho_{nL-l}\big\rangle\\
        & = \frac{1}{N_ON_T}\sum_{k,n}(y_{k,n}-y_{k,n}^*)\sum_{l=1}^{nL}\big\langle(\partial_{\theta_{\alpha}}\mathcal{K})^*A^{(k)}_{nL-l},\rho_{nL-l}\big\rangle\\
    \end{aligned}
\end{equation}
where $A^{(k)}_{nL-1}: = (\mathcal{K}^*)^{l-1}[A^{(k)}]$ is a back-propagated operator and the adjoint Kraus form
\begin{equation}
   (\partial_{\theta_\alpha}\mathcal{K})^*(\rho) = \sum_j (\partial_{\theta_\alpha} F_j)^{\dag}\rho F_j +F_j^{\dag}\rho (\partial_{\theta_\alpha}F_j),\quad \mathcal{K}^*(\rho) = \sum_{j} F_j^{\dag}\rho F_j
\end{equation}
for real parameters $\bm \theta$.
\end{proof}

\subsection{Second Order Implicit Approximation Algorithm}
\label{appendix: 2.0 scheme}

Similar to Lemma 1, we start with the unraveled SDE, 
\begin{equation}
    \mathrm{d}\ket{\psi} = (-iH_S\ket{\psi}-\frac{1}{2}\sum_{j=1}^{N_V} V_j^{\dag}V_j\ket{\psi} )\mathrm{d}t +\sum_{j=1}^{N_V} V_j\ket{\psi} \mathrm{d}W_{j;t}=G\ket{\psi} \mathrm{d}t +\sum_{j=1}^{N_V} V_j\ket{\psi} \mathrm{d}W_{j;t}
\end{equation}
According to the second order implicit weak scheme\cite[Chapter 15]{kloeden2011numerical}, we can write down a time-marching scheme for the wave function,
\begin{equation}\label{sde2nd}
\begin{aligned}
    \ket{\psi_{m+1}} &= \ket{\psi_m}+\frac{1}{2}(G\ket{\psi_{m+1}}+G\ket{\psi_m})\delta t+\sum_{j=1}^{N_V} V_j\ket{\psi_m}\delta  \hat{W}_j\\
    &+\frac{1}{2}\sum_{j=1}^{N_V} L^0 V_j\ket{\psi_m}\delta \hat{W}_j\delta t+\frac{1}{2}\sum_{j_1,j_2=1}^{N_V} L^{j_1}V_{j_2}\ket{\psi_m}(\delta \hat{W}_{j_1}\delta\hat{W}_{j_2}+U_{j_1,j_2}).   
\end{aligned}
\end{equation}
In this expression,  $\delta\hat{W}$ is an approximation of an increment of the Brownian motion. It was suggested in \cite{kloeden2011numerical} to sample with 
the three-point distribution,
\begin{equation}
    P(\delta\hat{W} = \pm\sqrt{3\Delta}) = \frac{1}{6}, P(\delta\hat{W} = 0) = \frac{2}{3}
\end{equation}

Meanwhile, $U_{j_1,j_2}$'s are independent two-point distributed random variables with
\begin{equation}
\begin{aligned}
    &P(U_{j_1,j_2} = \pm\Delta) = \frac{1}{2},\forall j_2=1,\cdots,j_1-1\\
&U_{j_1,j_1} = -\Delta\\
&U_{j_1,j_2} = -U_{j_2,j_1}, \text{ for } j_2 = j_1+1,\cdots,N_V, j_1=1,\cdots,N_V.
\end{aligned}
\end{equation}

In \cref{sde2nd}, the diffusion operator $L^{j_1}$'s are defined as follows. 
\begin{equation}
    \begin{aligned}
        L^0V_j\ket{\psi} &= \sum_{k=1}^d (G\psi)^k\frac{\partial}{\partial\psi^k}V_j\ket{\psi}=\sum_{k=1}^d (G\psi)^kV_j(:,k)= V_jG\ket{\psi} \\
        L^{j_1}V_{j_2}\ket{\psi} &= \sum_{k=1}^d (V_{j_1}\psi)^k\frac{\partial}{\partial\psi^k}V_{j_2}\ket{\psi} =\sum_{k=1}^d (V_{j_1}\psi)^kV_{j_2}(:,k)
        =V_{j_2}V_{j_1}\ket{\psi}\\
    \end{aligned}
\end{equation}
Here the notation $(:,k)$ indicates the $k$th column of the matrix.  Thus, \cref{eq: 2nd-order} or \cref{2ndorder} can be rewritten as 
\begin{equation}
\begin{aligned}
    \ket{\psi_{m+1}} &= (I-\frac{1}{2}G\delta t)^{-1}(I+\frac{1}{2}G\delta t)\ket{\psi_m}+\sum_{j=1}^{N_V}(I-\frac{1}{2}G\delta t)^{-1}(V_j+\frac{1}{2}V_jG\delta t)\ket{\psi_m}\delta\hat{W}_j\\
    &+\frac{1}{2}(I-\frac{1}{2}G\delta t)^{-1}\sum_{j_1,j_2 = 1}^{N_V} V_{j_2}V_{j_1}\ket{\psi_m}(\delta \hat{W}_{j_1}\delta\hat{W}_{j_2}+U_{j_1,j_2}).
\end{aligned}
\end{equation}

By taking expectations, this time-marching scheme induces an  iteration formula for the density operator $\rho_m = \mathbb{E}[\ket{\psi_m}\bra{\psi_m}]$,  as follows.
\begin{equation}
\begin{aligned}
     \rho_{m+1} &= (I-\frac{1}{2}G\delta t)^{-1}(I+\frac{1}{2}G\delta t)\rho_m(I+\frac{1}{2}G\delta t)(I-\frac{1}{2}G\delta t)^{-1}   \\
     & +(I-\frac{1}{2}G\delta t)^{-1}\sum_{j=1}^{N_V}V_j(I+\frac{1}{2}G\delta t)\rho_m(I+\frac{1}{2}G\delta t)V_j^{\dag}(I-\frac{1}{2}G\delta t)^{-1}\delta t\\
     &+\frac{1}{2}(I-\frac{1}{2}G\delta t)^{-1}\sum_{j_1,j_2=1}^{N_V} V_{j_1}V_{j_2}\rho_m V_{j_2\dag}V_{j_1\dag}(I-\frac{1}{2}G\delta t)^{-1}(\delta t)^2\\
     &=\sum_{j=0}^{N_V}F_j\rho_mF_j^{\dag}+\sum_{j_1,j_2=1}^{N_V}A_{j_1,j_2}\rho_mA_{j_1,j_2}^{\dag}\\
     & = \sum_{j=0}^{N_V^2+N_V}F_j\rho_mF_j^{\dag}= \mathcal{K}[\rho_m]
\end{aligned}
\end{equation}
where the Kraus operator 
\begin{equation}
    \begin{aligned}
        F_0 &= (I-\frac{1}{2}G\delta t)^{-1}(I+\frac{1}{2}G\delta t)\\
        F_j & = (I-\frac{1}{2}G\delta t)^{-1}V_j(I+\frac{1}{2}G\delta t)\sqrt{\delta t},j = 1,\cdots,N_V\\
        A_{j_1,j_2} & = \frac{1}{\sqrt{2}}(I-\frac{1}{2}G\delta t)^{-1} V_{j_1}V_{j_2}\delta t, j_1,j_2 =1,\cdots, N_V\\
        F_{j_1+N_Vj_2}&=A_{j_1,j_2}, j_1,j_2 = 1,\cdots, N_V\\
    \end{aligned}
\end{equation}

To show that the method has second order accuracy, we first expand the
solution of the Lindblad equation \eqref{eq: Lindblad1},
\begin{align}\label{expd-lindblad}
    \rho(\tau_{m+1}) = \rho(\tau_m) + \delta t \mathcal{L} \rho(\tau_m) + \frac{1}{2}  \delta t^2 \mathcal{L}^2 \rho(\tau_m) +\mathcal{O}(\delta t^3).
\end{align}
In particular, we have,
\[\mathcal{L} \rho = -i[H, \rho] + \sum_{j=1}^{N_V} \left(V_j\rho V_j^{\dag}-\frac{1}{2} \{ V_j^{\dag}V_j,\rho\} \right).
\]

We now show that the method \eqref{eq: 2nd-order} has a consistent expansion at \eqref{expd-lindblad} up to $\mathcal{O}(\delta t^2).$ Toward this end, we expand the Kraus operators,
\[
\begin{aligned}
    &F_0=  I + G \delta t + \frac12 G^2 \delta t^2 + \mathcal{O}(\delta t^3), \\ 
    &F_j =  V_j \sqrt{\delta t}  + \frac12 \{G, V_j \} \delta t \sqrt{\delta t}
      + \frac12 \{G,G V_j\} \delta t^2 \sqrt{\delta t}
     + \mathcal{O}(\delta t^{7/2}),\qquad j = 1,\cdots,N_V\\
       &F_{j_1+N_Vj_2}= \frac{1}{\sqrt{2}} V_{j_1}V_{j_2}\delta t 
       + \mathcal{O}(\delta t^{3/2}),\qquad j_1,j_2 = 1,\cdots,N_V
\end{aligned}
\]

By substituting these expansions of the jump operators, and only keeping terms of order up to $\mathcal{O}(\delta t^{2})$, we find that,
\begin{align*}
\rho_{m+1}= &\rho_m + \delta t (G \rho_m + \rho_m G^\dag) + \frac{1}{2} \delta t^2 (G^2 \rho_m + 2 G\rho_m G^\dag + \rho_m {G^\dag}^2 ) \\     
&  + \delta t \sum_{j=1}^{N_V} V_j \rho_m V_j^\dag 
  + \frac{1}{2} \delta t^2  \sum_{j=1}^{N_V} \left( V_j \rho_m \{G, V_j \}^\dag   + \{G, V_j \}  \rho_m V_j^\dag \right)\\
 &  + \frac{1}{2} \delta t^2   \sum_{j_1, j_2=1}^{N_V} V_{j_1}V_{j_2} 
  \rho_m V_{j_2}^\dag V_{j_1}^\dagger  + \mathcal{O}(\delta t^3). 
\end{align*}

In light of \cref{matG}, the $\mathcal{O}(\delta t)$ terms are consistent with that in \eqref{expd-lindblad}. With lengthy calculations, we can also see that the $\mathcal{O}(\delta t^2)$ terms are also consistent. From the calculation of the consistency of $\mathcal{O}(\delta t)$, we know that $G\rho_m+\rho_m G^{\dag} = \mathcal{L}\rho_m-\sum_jV_j\rho_m V_j^{\dag}$.
The detailed calculation can be summarized as follows:
\begin{equation*}
    \begin{aligned}
        &G^2 \rho_m + 2 G\rho_m G^\dag + \rho_m {G^\dag}^2  = G(G\rho_m+\rho_mG^{\dag})+(G\rho_m+\rho_m G^{\dag})G^{\dag} \\
        =& \mathcal{L}^2\rho_m-\sum_{j=1}^{N_V}\mathcal{L}(V_j\rho_m V_j^{\dag})-\sum_{j=1}^{N_V}V_j\mathcal{L}\rho_m V_j^{\dag}+\sum_{j_1,j_2 = 1}^{N_V} V_{j_2} V_{j_1}\rho_m V_{j_1}V_{j_2}\\
        & \sum_{j=1}^{N_V} \left( V_j \rho_m \{G, V_j \}^\dag   + \{G, V_j \}  \rho_m V_j^\dag \right) = \sum_j (V_j\rho_m V_j^{\dag}G^{\dag}+GV_j\rho_mV_j^{\dag})+V_j(\rho_m G^{\dag}+G\rho_m)V_j^{\dag}  \\
        =& \sum_j\mathcal{L}(V_j\rho_mV_j^{\dag})-2\sum_{j_1,j_2} V_{j_2}V_{j_1}\rho_m V_{j_1}^{\dag}V_{j_2}^{\dag}+\sum_j V_j\mathcal{L}\rho_m V_j^{\dag}\\
        \Rightarrow & G^2 \rho_m + 2 G\rho_m G^\dag + \rho_m {G^\dag}^2+\sum_{j=1}^{N_V} \left( V_j \rho_m \{G, V_j \}^\dag   + \{G, V_j \}  \rho_m V_j^\dag \right)+\sum_{j_1,j_2} V_{j_2}V_{j_1}\rho_m V_{j_1}^{\dag}V_{j_2}^{\dag} = \mathcal{L}^2\rho_m.
    \end{aligned}
\end{equation*}
This completes the proof.

\end{document}